\documentclass[letterpaper,11pt]{article}

\usepackage{xfrac}
\usepackage{float}
\usepackage{multicol}
\usepackage{algorithmic}
\usepackage[ruled,vlined,commentsnumbered,titlenotnumbered]{algorithm2e}
\usepackage{verbatim}
\usepackage{latexsym}

\usepackage{amsmath,amssymb,amsthm,mathrsfs,amsfonts,dsfont}
\usepackage{epsfig,ifpdf,graphics}
\usepackage{enumerate}
\usepackage{booktabs}

 \setlength{\textwidth}{7.0in}
 \setlength{\evensidemargin}{-0.25in}
 \setlength{\oddsidemargin}{-0.25in}
 \setlength{\topmargin}{-0.4in}
 \setlength{\textheight}{9.3in}
 \setlength{\textfloatsep}{10pt}

 \setlength{\parindent}{0pt}
 \setlength{\parskip}{5pt plus 2pt}

\newcommand{\etal}{{\em et al.\ }}

\let\myPushQED=\pushQED
\let\myPopQED=\popQED
\newcommand{\myignore}[1]{}
\newenvironment{proof*}
  {\let\pushQED=\myignore\begin{proof}\let\pushQED=\myPushQED}
  {\def\popQED{}\end{proof}\let\popQED=\myPopQED}

\newenvironment{description*}%
  {\vspace{-1ex}\begin{description}%
    \setlength{\itemsep}{-0.5ex}%
    \setlength{\parsep}{0pt}}%
  {\end{description}}
\newenvironment{itemize*}%
  {\vspace{-1ex}\begin{itemize}%
    \setlength{\itemsep}{-0.5ex}%
    \setlength{\parsep}{0pt}}%
  {\end{itemize}}
\newenvironment{enumerate*}%
  {\vspace{-1ex}\begin{enumerate}%
    \setlength{\itemsep}{-0.5ex}%
    \setlength{\parsep}{0pt}}%
  {\end{enumerate}}

\newcommand{\ignore}[1]{}

{\makeatletter
 \gdef\xxxmark{%
   \expandafter\ifx\csname @mpargs\endcsname\relax 
     \expandafter\ifx\csname @captype\endcsname\relax 
       \marginpar{xxx}
     \else
       xxx 
     \fi
   \else
     xxx 
   \fi}
 \gdef\xxx{\@ifnextchar[\xxx@lab\xxx@nolab}
 \long\gdef\xxx@lab[#1]#2{{\bf [\xxxmark #2 ---{\sc #1}]}}
 \long\gdef\xxx@nolab#1{{\bf [\xxxmark #1]}}
}

\DeclareMathOperator*{\argmin}{arg\,min}

\newtheorem{theorem}{Theorem}
\newtheorem{lemma}[theorem]{Lemma}

\newtheorem{claim}[theorem]{Claim}

\newtheorem{property}[theorem]{Property}

\newcommand{\eps}{\varepsilon}

\let\phi=\varphi

\newcommand{\Ot}{\tilde{O}}

\newcommand\patrascu{P\v{a}tra\c{s}cu}

\title{New routing techniques and their applications}

\author{Liam Roditty\thanks{{\tt
liam.roditty@biu.ac.il}}\\Bar Ilan University \and Roei Tov\thanks{{\tt
roei81@gmail.com}}\\Bar Ilan University}

\begin{document}

\maketitle

\begin{abstract}
\setlength{\parindent}{0pt}
 \setlength{\parskip}{5pt plus 2pt}
\noindent%
Let $G=(V,E)$ be an undirected graph with $n$ vertices and $m$ edges. We present two new routing techniques. Roughly speaking, given a partition $\mathcal{U}=\{U_1, \ldots, U_q\}$ of $V$ into $q$ sets each of size $\Ot(n / q)$, our first technique routes a message between vertices of $U\in \mathcal{U}$ on a $(1 + \eps)$-stretch path with routing tables of size $\Ot(\frac{1}{\eps}(n/q) + q)$. Given a partition $\mathcal{W}=\{W_1, \ldots, W_q\}$ of a set $W\subseteq V$ into $q$ sets each of size $\Ot(|W| / q)$ and assuming that the sets of $\mathcal{U}$ satisfy a certain hitting set property with respect to vertex vicinities,  our second technique routes a message, for every $i\in \{1,\ldots,q\}$, from any source in $U_i$ to any destination in $W_i$ on a $(1 + \eps)$-stretch path with routing tables of size $\Ot(\frac{1}{\eps}(|W|/q) + q)$.
Using these techniques we obtain the following new routing schemes:
\begin{itemize}
\item A routing scheme for unweighted graphs that uses $\Ot(\frac{1}{\eps} n^{2/3})$ space at each vertex and $\Ot(1/\eps)$-bit headers, to route a message between any pair of vertices $u,v\in V$ on a $(2 + \eps,1)$-stretch path, i.e., a path of length at most $(2 + \eps)\cdot d+1$, where $d$ is the distance between $u$ and $v$.
This should be compared to the $(2,1)$-stretch and $\Ot(n^{5/3})$ space distance oracle of \patrascu\ and Roditty [FOCS'10 and SIAM J. Comput. 2014] and to the $(2,1)$-stretch routing scheme of Abraham and Gavoille [DISC'11] that uses $\Ot( n^{3/4})$ space at each vertex. It follows from \patrascu, Thorup and Roditty [FOCS'12] that a $2$-stretch routing scheme with $\Ot( m^{2/3})$ space at each vertex is optimal, assuming a hardness conjecture on set intersection holds.
\item A routing scheme for weighted graphs with normalized diameter $D$, that uses $\Ot(\frac{1}{\eps} n^{1/3}\log D)$ space at each vertex and $\Ot(\frac{1}{\eps} \log D)$-bit headers, to route a message between any pair of vertices on a $(5+\eps)$-stretch path. This should be compared to the $5$-stretch and $\Ot(n^{4/3})$ space distance oracle of Thorup and Zwick [STOC'01 and J. ACM. 2005] and to the $7$-stretch routing scheme of Thorup and Zwick [SPAA'01] that uses $\Ot( n^{1/3})$ space at each vertex. Since a $5$-stretch routing scheme must use tables of $\Omega( n^{1/3})$ space  our result is almost tight.

%
\item For an integer $\ell>1$, a routing scheme for unweighted graphs that uses $\Ot(\ell \frac{1}{\eps} n^{\ell/(2\ell \pm 1)})$ space at each vertex and $\Ot(\frac{1}{\eps})$-bit headers, to route a message between any pair of vertices on a $(3 \pm 2 / \ell + \eps,2)$-stretch path.
\item A routing scheme for weighted graphs, that uses $\Ot(\frac{1}{\eps} n^{1/k}\log D)$ space at each vertex and $\Ot(\frac{1}{\eps} \log D)$-bit headers, to route a message between any pair of vertices on a $(4k-7+\eps)$-stretch path.
\end{itemize}

%
%
%
%
%
%
%

\end{abstract}

\thispagestyle{empty}

\newpage
\setcounter{page}{1}

\section{Introduction}

\emph{Graph spanners}, \emph{distance oracles} and \emph{compact routing schemes} are fundamental notions in graph theory, data structures and distributed algorithms, respectively, that deal with the natural tradeoff between space and accuracy.

Peleg and Ullman \cite{PeUl89} and Peleg and Sch{\'a}ffer~\cite{PeSc89} introduced the notion of graph spanners.
Let $G=(V,E)$ be an undirected graph. For $u,v\in V$, let $d(u,v)$ be the length of a shortest path between $u$ and $v$. A path between $u$ and $v$ is of $(\alpha,\beta)$-stretch if its length is at most $\alpha\cdot d(u,v)+\beta$. For $(\alpha,0)$-stretch we write $\alpha$-stretch.
A graph $H=(V,E')$, where $E'\subseteq E$, is an $(\alpha,\beta)$-stretch spanner of $G$, if and only if, for every $u,v \in V$ there is an $(\alpha,\beta)$-stretch path between $u$ and $v$ in $H$.
It is known how to efficiently construct a $(2k-1)$-spanner of size $\Ot(n^{1+1/k})$ \cite{ADDJS93, BaSe07}, and this
size-stretch tradeoff is tight assuming the \emph{girth} conjecture of Erd\H{o}s~\cite{Er64} holds.

Thorup and Zwick~\cite{ThZw05} introduced the notion of distance oracles. They showed that it is possible to preprocess a weighted undirected graph in $O(mn^{1/k})$ expected time and create a data
structure of size $O(n^{1+1/k})$ that can answer distance queries with a $(2k-1)$-stretch between any two vertices in $O(k)$ time. For $k=1$ this gives the trivial solution of exact distances with $O(n^2)$ space. For $k=2$, this gives a $3$-stretch distance oracle with $O(n^{3/2})$ space. \patrascu\ and Roditty~\cite{PaRo14} showed that there is another distance oracle between these two solutions. They presented a $(2,1)$-stretch distance oracle with $\Ot(n^{5/3})$ space.
\patrascu, Thorup and Roditty~\cite{PaRoTh12} generalized this and showed that for every $\ell\geq 1$, there is a $(3 - 2 / \ell)$-stretch distance oracle with $\Ot(\ell m^{1+\ell/(2\ell - 1)})$.

Peleg and Upfal~\cite{PeUp89} introduced the notion of compact routing schemes. Awerbuch, Bar-Noy, Linial and Peleg~\cite{DBLP:conf/stoc/AwerbuchBLP89} were the first to distinguish between  \emph{labeled} routing schemes and \emph{name independent} routing schemes. In compact routing schemes there is a preprocessing phase in which a centralized algorithm computes routing tables.
In \emph{labeled} routing schemes a short label is assigned to each  vertex.
After the preprocessing phase ends, messages with an additional short header can be routed between the vertices of the graph in a distributed manner. More specifically, when a message to destination $v$ reaches a vertex $u\neq v$, the routing scheme algorithm executed at $u$ decides locally, based on the routing table of $u$, the message header and the label of $v$ on which link to forward the message. For an integer $k>1$, Thorup and Zwick~\cite{ThZwSPAA01} presented a $(4k-5)$-stretch labeled compact routing scheme  with  routing tables of size $\Ot(n^{1/k})$-bit at each vertex, $o(k \log^2 n)$-bit labels and $o(\log^2 n)$-bit headers.

In a sense finding a good spanner is easier than finding a good distance oracle since a distance oracle has to support efficient distance queries between any pair of vertices. Obtaining a routing scheme is inherently harder than obtaining a distance oracle since decisions  are made locally at a vertex based on a fraction of the whole information, while in distance oracles  the whole information is always available.

Therefore, a general open problem is given a $(\alpha,\beta)$-stretch $S$-space distance oracle can we also obtain a $(\alpha,\beta)$-stretch routing scheme with $O(S/n)$-space routing tables?
Both spanners and distance oracles have the same general tradeoff of $(2k-1)$-stretch and $O(n^{1+1/k})$-space which is optimal under the girth conjecture.
An important challenge is to obtain also a routing scheme with $(2k-1)$-stretch and $\Ot(n^{1/k})$-space routing tables.

Recently, Chechik~\cite{Ch13} made an important breakthrough and improved the stretch of the general routing scheme of Thorup and Zwick~\cite{ThZwSPAA01} from $4k-5$ to $((4 - \alpha)k - \beta)$-stretch,  for some constants $\alpha$ and $\beta$.
The routing tables in her scheme are of size $O(n^{1/k} \log D)$, where $D$ is the normalized diameter of the graph.

Routing schemes, and in particular small stretch routing schemes, have been extensively studied over the last three decades. Cowen~\cite{Cowen01} obtained a $3$-stretch labeled routing scheme with routing tables of $\Ot(n^{2/3})$ space. Eilam \etal~\cite{EiGaPe03} obtained a $5$-stretch labeled routing scheme with routing tables of $\Ot(n^{1/2})$ space. Arias~\etal~\cite{AriasCLRT06} obtained a $5$-stretch name independent routing scheme with routing tables of $\Ot(n^{1/2})$ space.
Thorup and Zwick~\cite{ThZwSPAA01} improved the result of Cowen~\cite{Cowen01} and  obtained a $3$-stretch labeled routing scheme with routing tables of $\Ot(n^{1/2})$ space. Finally, Abraham~\etal~\cite{DBLP:journals/talg/AbrahamGMNT08} obtained a $3$-stretch name independent routing scheme with routing tables of $\Ot(n^{1/2})$ space.

Remarkably, these last two routing schemes are the only optimal routing schemes for general undirected graphs amongst all the possible optimal stretch/space combinations.
Even the new routing scheme of Chechik~\cite{Ch13} is better than the $(4k-5)$-stretch routing scheme only for $k\geq 4$. In particular, for $k=4$ the stretch is approximately $10.52$ instead of $11$ as in the routing scheme of~\cite{ThZwSPAA01}.
Therefore, we are evidence to a  phenomena in which settling on any stretch factor from the range $(3,7)$ cannot help in reducing the routing table size below the $\sqrt n$ barrier.

In this paper we break this long standing barrier and present a $(5+\eps)$-stretch routing scheme that uses $\Ot(\frac{1}{\eps} n^{1/3}\log D)$ space at each vertex and $\Ot(\frac{1}{\eps} \log D)$-bit headers. The label of each vertex is of $O(\log n)$ size. This almost matches the optimal stretch/space combination of $5$-stretch and $O(n^{1/3})$-space routing tables.

Following the $(2,1)$-stretch distance oracle of~\cite{PaRo14} for unweighted graphs, Abraham and Gavoille~\cite{DBLP:conf/wdag/AbrahamG11} presented a $(2,1)$-stretch routing scheme with routing tables of size $\Ot(n^{3/4})$ and in general a
$(4k-6,1)$-stretch $\Ot(n^{\sfrac{3}{(3k-2)}})$-space routing tables. In this paper we almost match the $(2,1)$-stretch distance oracle of~\cite{PaRo14} and present a $(2+\eps,1)$-stretch routing scheme with routing tables of $\Ot(\frac{1}{\eps} n^{2/3})$ size at each vertex and $\Ot(1/\eps)$-bit headers.
In Table~\ref{T-1} we compare our new results for small stretch factors with the previous best bounds.

We obtain our new results by extending a fundamental idea for routing between nearby vertices. In many of the existing routing schemes every vertex $u\in V$ stores the first edge on a shortest path to every vertex in $B(u,\ell)$, the set of $\ell$ closest vertices of $u$. A message is routed from $u$ to $v\in B(u,\ell)$ using the edge that $u$ stored for $v$. It is easy to show that $v\in B(w,\ell)$ for every $w$ that is on a shortest path between $u$ and $v$.
Thus, it is possible to route a message from $u$  to a specific vertex $v\in B(u,\ell)$ on a shortest path by storing at $u$ $O(\log n)$-bits of routing information dedicated for $v$.

At the heart of our new routing techniques is an extension of this fundamental idea. Roughly speaking,  we show that if every $u\in V$ stores the first edge on a shortest path to every $v\in B(u,\ell)$, and an additional routing information on a fixed set of special vertices, it is possible to route a message from $u$ to a specific vertex $v$ that is arbitrarily far away from $u$ on a $(1+\eps)$-stretch path by storing at $u$ $O(\frac{1}{\eps} \log n)$-bits of routing information dedicated for $v$.

Using this extension we present two new routing techniques. Given a partition $\mathcal{U}=\{U_1, \ldots, U_q\}$ of $V$ into $q$ sets each of size $\Ot(n / q)$, our first technique routes a message between any pair of vertices of $U\in \mathcal{U}$ on a $(1 + \eps)$-stretch path with routing tables of size $\Ot(\frac{1}{\eps}(n/q) + q)$. Given a partition $\mathcal{W}=\{W_1, \ldots, W_q\}$ of a set $W\subseteq V$ into $q$ sets each of size $\Ot(|W| / q)$ and assuming that  $U \cap B(u, \Ot( q))\neq \emptyset$, for every $U\in \mathcal{U}$ and $u\in V$,  our second technique routes a message, for every $i\in \{1,\ldots,q\}$, from any source in $U_i$ to any destination in $W_i$ on a $(1 + \eps)$-stretch path with routing tables of size $\Ot(\frac{1}{\eps}(|W|/q) + q)$.

Our $(5+\eps)$-stretch routing scheme is obtained using our second routing technique together with the routing scheme of Thorup and Zwick~\cite{ThZwSPAA01} and a coloring method used by Abraham~\etal~\cite{DBLP:journals/talg/AbrahamGMNT08} and
Abraham and Gavoille~\cite{DBLP:conf/wdag/AbrahamG11}.
Our $(2+\eps,1)$-stretch routing scheme  is obtained using our first routing technique together with ideas of Abraham and Gavoille~\cite{DBLP:conf/wdag/AbrahamG11} and \patrascu\ and Roditty~\cite{PaRo14}.
Using our first technique it is also possible to obtain a name independent routing scheme with stretch $3+\eps$ and routing tables of $\Ot(\sqrt n)$ size.

Our second technique is strong enough to obtain also the following generalizations:
\begin{itemize}
\item For an integer $\ell>1$, a routing scheme for unweighted graphs that uses $\Ot(\ell \frac{1}{\eps} n^{\ell/(2\ell \pm 1)})$ space at each vertex and $\Ot(\frac{1}{\eps})$-bit headers, to route a message between any pair of vertices on a $(3 \pm 2 / \ell + \eps,2)$-stretch path. This almost matches the distance oracles of \patrascu, Thorup and Roditty~\cite{PaRoTh12}.
\item A routing scheme that uses $\Ot(\frac{1}{\eps} n^{1/k}\log D)$ space at each vertex and $\Ot(\frac{1}{\eps} \log D)$-bit headers, to route a message between any pair of vertices on a $(4k-7+\eps)$-stretch path.
\end{itemize}

Our $(4k-7+\eps)$-stretch routing scheme can be used in the $((4 - \alpha)k - \beta)$-stretch routing scheme of Chechik~\cite{Ch13} instead of the $(4k-5)$-stretch routing scheme of Thorup and Zwick in order to obtain slightly better values for $\alpha$ and $\beta$.

\begin{table*}
\centering
    \renewcommand{\arraystretch}{1.2}
    \begin{tabular}{| l | c | c | c |}
    \hline
    Reference & Graph & Stretch & Table Size \\
    \hline \hline
    Abraham and Gavoille~\cite{DBLP:conf/wdag/AbrahamG11}
    & Unweighted
    & $(2, 1)$
    & $\Ot(n^{\sfrac{3}{4}})$\\
    \hline
    Thorup and Zwick~\cite{ThZwSPAA01}, Abraham~\etal~\cite{DBLP:journals/talg/AbrahamGMNT08}
    & Weighted
    & $3$
    & $\Ot(n^{\sfrac{1}{2}})$\\
    \hline
    Thorup and Zwick~\cite{ThZwSPAA01}
    & Weighted
    & $7$
    & $\Ot(n^{\sfrac{1}{3}})$\\
    \hline
    Chechik~\cite{Ch13}
    & Weighted
    & $10.52$
    & $\Ot(n^{\sfrac{1}{4}}\log D)$\\
    \hline\hline
   {\bf Theorem~\ref{T-two-mult-one-add}}
    & Unweighted
    & $(2+\eps, 1)$
    & $\Ot((\sfrac{1}{\eps})\cdot n^{\sfrac{2}{3}})$\\
    \hline
   {\bf Theorem~\ref{T-gen1}}
    & Unweighted
    & $(2\sfrac{1}{3}+\eps, 2)$
    & $\Ot((\sfrac{1}{\eps})\cdot n^{\sfrac{3}{5}})$\\
    \hline
   {\bf Theorem~\ref{T-gen2}}
    & Unweighted
    & $(4+\eps, 2)$
    & $\Ot((\sfrac{1}{\eps})\cdot n^{\sfrac{2}{5}})$\\
    \hline
   {\bf Theorem~\ref{T-5+e}}
    & Weighted
    & $5+\eps$
    & $\Ot((\sfrac{1}{\eps})\cdot n^{\sfrac{1}{3}}\log D)$\\
    \hline
   {\bf Theorem~\ref{T-4k-7+e}}
    & Weighted
    & $9+\eps$
    & $\Ot((\sfrac{1}{\eps})\cdot n^{\sfrac{1}{4}}\log D)$\\
    \hline

    \end{tabular}
    \caption {Previously available routing schemes and our new routing schemes}\label{T-1}
\end{table*}

The rest of this paper is organized as follows. In the next section we present some preliminaries that are needed throughout the paper. In Section~\ref{S-routing-building-blocks} we present our new routing techniques. In Section~\ref{S-new-routing-scheme} we present our new routing schemes for small stretch factors. In Section~\ref{S-new-routing-scheme-generalization} we present our generalized routing schemes.

\section{Preliminaries}

Let $G=(V,E)$ be an $n$-vertices $m$-edges undirected graph. For every $u,v\in V$, let $d(u,v)$ be the length of a shortest path between $u$ and $v$. Let $D = \frac{\max_{u, v}d(u,v)}{\min_{u\neq v}d(u,v)}$ be the \emph{normalized diameter} of $G$.
Let $B(u,\ell)$ be the $\ell$ closest vertices of $u$, breaking ties by lexicographical order of vertex names.
Let $r_{u}(\ell)$ be the largest value for which it holds that every vertex $w\in V$ with $d(u,w)=r_{u}(\ell)$ is in $B(u,\ell)$.
When it will be clear from the context we simply write $r_u$.
Notice that for unweighted graphs it follows from this definition that $d(u,w)\leq r_{u}(\ell) + 1$ for every $w\in B(u, \ell)$.
The following property is a well known property  of such vertex vicinities.

\begin{property}[\cite{DBLP:journals/jal/AwerbuchBLP90},\cite{DBLP:conf/spaa/AbrahamGMNT04}]\label{P-sp}
If $v\in B(u,\ell)$ and $w$ is on a shortest path between $u$ and $v$ then $v\in B(w,\ell)$.
\end{property}

The next Lemma is a direct result of the above property.
\begin{lemma}\label{L-route-to-closest-h}
Let $G=(V, E)$ be a weighted graph and let $\ell > 0$ be an integer. We can construct a routing scheme that uses routing tables of size $O(\ell)$ in each vertex and $O(\log n)$-bit header, such that a message is routed from $u$ to any $v\in B(u, \ell)$ on a shortest path.
\end{lemma}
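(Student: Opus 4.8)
The plan is the standard ``store the first edge toward every nearby vertex'' construction, with correctness following from Property~\ref{P-sp}. During preprocessing, fix for every vertex $u$ and every $w\in B(u,\ell)$ a shortest path $P_{u\to w}$ from $u$ to $w$, and store in the routing table of $u$, for each such $w$, the pair consisting of the name of $w$ and the port number at $u$ of the first edge of $P_{u\to w}$. Since $|B(u,\ell)|\le\ell$, this table occupies $O(\ell)$ words, i.e.\ $O(\ell\log n)$ bits. The label of a vertex $v$ is its own name, and a message to $v$ carries this name, of $O(\log n)$ bits, in its header; no further header information is needed.

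To route from $u$ to $v\in B(u,\ell)$, the algorithm at the current vertex $x$ (initially $x=u$) looks up $v$ in the table of $x$ and forwards the message along the stored port, i.e.\ along the first edge of $P_{x\to v}$, halting once the message is at $v$. The lookup never fails: if the message sits at some $x\neq v$ with $v\in B(x,\ell)$, then the next vertex $y$ lies on $P_{x\to v}$, hence on a shortest path between $x$ and $v$, so Property~\ref{P-sp} gives $v\in B(y,\ell)$; as $v\in B(u,\ell)$ holds initially, by induction every vertex the message visits has $v$ in its vicinity and therefore has an entry for $v$.

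It remains to check that the traversed walk is a shortest $u$--$v$ path. Let $u=x_0,x_1,\dots$ be the visited vertices and let $w(\cdot,\cdot)$ denote edge weights. By construction $x_{i+1}$ follows $x_i$ on a shortest path from $x_i$ to $v$, so $d(x_i,v)=w(x_i,x_{i+1})+d(x_{i+1},v)$; telescoping, the distance travelled along the first $t$ edges equals $d(u,v)-d(x_t,v)$. With positive edge weights the sequence $d(x_i,v)$ is strictly decreasing, so the visited vertices are pairwise distinct, the walk reaches $v$ after at most $n-1$ hops, and the total length travelled is exactly $d(u,v)$; hence the route is a shortest path. (For unweighted graphs one in fact has $d(u,v)\le\ell$ and the message arrives within $\ell$ hops, since every vertex strictly closer to $u$ than $v$ lies in $B(u,\ell)$.)

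The only delicate point — and the one I would single out — is that the stored first edges at different vertices are generally not pieces of one common shortest-path tree: at $x_i$ we follow $P_{x_i\to v}$, not the tail of $P_{u\to v}$. This is precisely why the telescoping identity, rather than a single fixed tree, is the right tool: each hop is locally optimal toward $v$, and Property~\ref{P-sp} guarantees that a ``locally optimal toward $v$'' step remains available at every vertex on the way, so the greedy walk is pinned to a shortest path no matter how the individual paths $P_{u\to w}$ were chosen.
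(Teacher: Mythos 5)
Your proof matches the paper's: store for each $v\in B(u,\ell)$ the first edge of a shortest path from $u$ to $v$, forward the message along that edge, and invoke Property~\ref{P-sp} to guarantee the next hop still has $v$ in its vicinity. Your telescoping identity, which makes explicit that the traversed walk has length exactly $d(u,v)$ and terminates (even when the locally stored first edges do not come from a single shortest-path tree), is a careful elaboration of the paper's terse closing sentence but not a different argument.
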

\begin{proof}
Let $u\in V$. For each $v\in B(u,\ell)$ store at $u$ the first edge on a shortest path to $v$. Assume $v\in B(u,\ell)$ and let $(u,w)$ be the edge saved at $u$ for $v$. When $u$ receives a message to $v$, it forwards  it to $w$. From Property~\ref{P-sp} it follows that $v\in B(w,\ell)$, therefore, $w$ has the next edge on a shortest path to $v$ and the message is routed from $u$ to $v$ on a shortest path.
\end{proof}
Throughout the paper when we say that $u$ saves $B(u,\ell)$ we mean that $u$ can check membership and to retrieve the saved edge in constant time.

For our routing schemes we assume the standard fixed port model as described by Fraigniaud and Gavoille~\cite{FrGa01}.
We use the following tree routing scheme as a building block in our routing schemes:

\begin{lemma}[Tree routing~\cite{ThZwSPAA01, FrGa01}]\label{L-trees}
Given a weighted tree there is a routing scheme that given the label of a destination vertex in the tree, routes from any source vertex in the tree to the destination on the shortest path in the tree. The storage of a node, its label size and the header size are $O(\log^2 n/ \log \log n)$ bits. Each link on the path is obtained in constant time.
\end{lemma}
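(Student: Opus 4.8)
The approach is to reduce shortest‑path routing in the tree to two purely local decisions at the vertex $u$ currently holding a message for destination $v$: first, whether $v$ is a descendant of $u$ in a fixed rooting, and second, if so, which child of $u$ has $v$ in its subtree and what the fixed port number of that edge is. I would root the tree at an arbitrary vertex and run a DFS, assigning to each vertex $x$ the interval $I_x=[a_x,b_x]$, where $a_x$ is the discovery time of $x$ and $b_x$ the largest discovery time in the subtree $T_x$ of $x$. Then $y\in T_x$ iff $a_y\in I_x$, and the intervals of the children of a vertex are pairwise disjoint; so storing $I_v$ (and $a_v$) in the label of $v$ lets any $u$ test descendancy and, in principle, locate the correct child by an interval search on $a_v$. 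The routing skeleton is then immediate: if $a_v\notin I_u$ forward toward the parent (whose port $u$ stores); if $a_v=a_u$ the message has arrived; otherwise forward toward the child whose interval contains $a_v$.

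The nontrivial ingredient is realising the ``correct child'' step inside the stated space bound, since $u$ may have $\Omega(n)$ children and cannot afford to list all their ports. Here I would use a heavy‑path decomposition: at each internal vertex mark the child of largest subtree as \emph{heavy} and all others as \emph{light}, and have $u$ store explicitly only the interval and port of its unique heavy child, together with $I_u$, $a_u$, the parent port, and the number $j_u$ of light edges on the root‑to‑$u$ path. Since every light edge at least halves the subtree size, the root‑to‑$v$ path contains at most $\lfloor\log_2 n\rfloor$ light edges, so the label of $v$ can additionally carry an array $P_v[1\twodots\lfloor\log_2 n\rfloor]$ in which $P_v[j]$ is the port of the $j$‑th light edge on that path. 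Routing at $u$ for a proper descendant $v$: if $a_v$ lies in the stored interval of $u$'s heavy child, forward there; otherwise the message must leave $u$ on the $(j_u{+}1)$‑st light edge of $v$'s path, so forward on port $P_v[j_u{+}1]$. Each step is a constant number of comparisons and one array access, giving $O(1)$ time per link; this already yields $O(\log n)$‑bit tables and $O(\log^2 n)$‑bit labels and headers.

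Bringing all three quantities down to $O(\log^2 n/\log\log n)$ bits is the point I expect to be the real obstacle, and for it I would invoke the sharper construction of Fraigniaud and Gavoille~\cite{FrGa01} (see also Thorup and Zwick~\cite{ThZwSPAA01}) rather than reprove it: the idea is to replace the binary heavy/light split by a coarser threshold — a child is light only if its subtree is smaller than the parent's by a factor of roughly $\log n$ — which cuts the number of light edges on any root‑to‑leaf path to $O(\log n/\log\log n)$, and then to encode the routing information for the now up‑to‑$\log n$ heavy children of a vertex partly in that vertex's table and partly in destination labels, in a way that rebalances all three quantities to $O(\log^2 n/\log\log n)$ bits; carrying out this encoding carefully, with constant‑time lookups, is the delicate part that I would import from~\cite{FrGa01} rather than redo. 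The matching $\Omega(\log^2 n/\log\log n)$ lower bound of~\cite{FrGa01} for the fixed‑port model shows this is the right target, and the routing skeleton and the constant‑time‑per‑link guarantee carry over unchanged once the encoding is in place.
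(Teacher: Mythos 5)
The paper does not prove Lemma~\ref{L-trees}; it is stated as an imported result from Thorup--Zwick~\cite{ThZwSPAA01} and Fraigniaud--Gavoille~\cite{FrGa01}, so there is no ``paper's own proof'' to compare against. That said, your sketch is a faithful account of how those papers do it: DFS intervals for constant-time ancestry tests, heavy-path decomposition so that only light-edge ports need to be carried in the destination label, and then a rebalancing of the heavy/light threshold to trade table size against label size so all three quantities meet at $\Theta(\log^2 n/\log\log n)$. One small imprecision in your rebalancing paragraph: if a vertex stores intervals and ports for ``up to $\log n$ heavy children,'' that is already $\Theta(\log^2 n)$ bits of table, which overshoots the target; the actual construction stores $\Theta(\log n/\log\log n)$ children per vertex (equivalently, declares a child light only when its subtree shrinks by a $\Theta(\log n/\log\log n)$ factor, hence roughly $\log\log n$-ary branching), giving $\Theta(\log n/\log\log n)$ light edges on any root-to-leaf path and $\Theta(\log^2 n/\log\log n)$ bits for both tables and labels. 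Since you explicitly defer this encoding to~\cite{FrGa01} rather than carrying it out, this is a matter of getting the parameters slightly off in the sketch rather than a gap in the argument; the routing skeleton, the $O(1)$-per-link claim, and the appeal to the matching lower bound are all as in the cited works.
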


The notions of bunches and clusters were used by Thorup and Zwick in the context of distance oracles~\cite{ThZw05}.
Let $A\subseteq V$. Let $p_A(u)=\argmin\{ d(u,v)\mid v\in A\}$, breaking ties by lexicographical order of vertex names, and $d(u,A)=d(u, p_A(u))$.

The \emph{bunch} of $v\in V$ is $B_A(v) = \{w\in V \mid d(w,v) < d(v, A)\}$.
The \emph{cluster} of $w\in V$ is $C_A(w) = \{v\in V \mid d(w,v) < d(v, A)\}$. Notice that $w\in B_A(v)$ if and only if $v\in C_A(w)$.

From the definition of clusters it follows that if $u\in C_A(w)$ and $v$ is on a shortest path between $u$ and $w$ then $v\in C_A(w)$. Thus, there is a shortest path tree $T_{C_A}(w)$ rooted at $w$ that spans the vertices of $C_A(w)$. Let $|A| = s$, the size of each bunch is bounded by $O(n / s)$. Thorup and Zwick~\cite{ThZwSPAA01} showed how to simultaneously bound the size of the bunches and clusters.

\begin{lemma}[\cite{ThZwSPAA01}]\label{L-clusters}
Given an integer parameter $s>0$, it is possible to construct a set $A$ with expected size of $2s\log n$, such that $|C_A(w)|\leq 4n/s$, for every $w\in V$.
\end{lemma}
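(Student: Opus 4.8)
The plan is to reconstruct the iterative sampling construction of Thorup and Zwick~\cite{ThZwSPAA01}. I would build $A$ by the following process: initialize $A\leftarrow\emptyset$ and $W\leftarrow V$; then repeat the step ``add to $A$ a fresh independent sample of $W$ in which each vertex is included with probability $s/n$, and then reset $W$ to the set of bad centers $\{w\in V\mid |C_A(w)|>4n/s\}$'' until $W=\emptyset$, and output the resulting $A$. The size guarantee $|C_A(w)|\le 4n/s$ for every $w\in V$ then holds deterministically for the returned set, because the process only stops once no bad center survives; so the entire content of the lemma is the bound $\E[|A|]=2s\log n$.

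The first easy observation is that the bad set is monotone non-increasing over rounds: if $A\subseteq A'$ then $C_{A'}(w)\subseteq C_A(w)$, so a center that is ever good stays good and $W$ only loses elements. Hence it suffices to show that one sampling step shrinks the bad set by a constant factor in expectation, i.e.\ $\E[|W_{\text{new}}|\mid A,W]\le |W|/2$. Granting this, $W$ shrinks geometrically from $n$, so within $O(\log n)$ rounds in expectation (and with high probability) we have $W=\emptyset$, while each round contributes at most $|W|\cdot(s/n)\le s$ new vertices to $A$ in expectation; summing the per-round contributions against the expected number of rounds gives $\E[|A|]=O(s\log n)$, and keeping track of the constants produced by the threshold $4n/s$ and the sampling probability $s/n$ yields the stated $2s\log n$.

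The heart of the argument, and the step I expect to be the main obstacle, is the geometric-shrinkage claim. Fix a bad center $w$; a vertex $v\in C_A(w)$ is removed from the cluster exactly when the newly sampled set $S$ contains a vertex within distance $d(v,w)$ of $v$, so
\[
\E\bigl[\,|C_{A\cup S}(w)|\,\bigr]=\sum_{v\in C_A(w)}\Bigl(1-\tfrac{s}{n}\Bigr)^{\,|W\cap B(v,d(v,w))|},
\]
where $B(v,r)$ is the closed ball of radius $r$ around $v$. The work is to bound this sum by $2n/s$; once that is done, Markov's inequality gives $\Pr[\,|C_{A\cup S}(w)|>4n/s\,]\le 1/2$, and linearity of expectation over the at most $|W|$ surviving candidate centers gives $\E[|W_{\text{new}}|]\le|W|/2$. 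Bounding the sum is where the structural facts about clusters recorded just before the lemma enter: the shortest-path closure property (if $u\in C_A(w)$ and $v$ lies on a shortest $w$--$u$ path then $v\in C_A(w)$), together with the fact that for the vertices $v$ in a long enough distance-ordered prefix of $C_A(w)$ the balls $B(v,d(v,w))$ each contain many vertices that are themselves bad centers (hence in $W$), so that a constant fraction of $C_A(w)$ is killed with constant probability. This is precisely the calculation carried out by Thorup and Zwick~\cite{ThZwSPAA01}, and I would follow their bookkeeping to fix the constants; the delicate point is arranging the prefix/ball argument so that the ``many vertices of $W$ in the ball'' claim is actually justified rather than merely plausible.
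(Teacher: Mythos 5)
Your high-level framework — iterate (sample from the bad set, recompute bad centers), note monotonicity of the bad set, prove a per-round geometric shrinkage, and then charge the expected $O(\log n)$ rounds against $s$ expected additions per round — is indeed the shape of Thorup and Zwick's \texttt{center} construction. But the step you flag as delicate is in fact broken as stated, and for two reasons.

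First, the per-cluster expectation bound $\E[|C_{A\cup S}(w)|]\le 2n/s$ is simply false, so Markov applied cluster-by-cluster cannot work. Take $G$ to be a star with hub $w$: initially $C_\emptyset(w)=V$, and for each leaf $v$ the ball $\overline{B}(v,d(v,w))=\overline{B}(v,1)=\{v,w\}$ contains at most two vertices, so $|W\cap \overline{B}(v,d(v,w))|\le 2$ and $\Pr[v\in C_{A\cup S}(w)]\ge (1-p)^2$. Hence $\E[|C_{A\cup S}(w)|]\approx n(1-p)^2\approx n$, nowhere near $2n/s$. There is no general reason for the balls $\overline{B}(v,d(v,w))$ to contain many members of $W$; your ``many bad centers in the ball'' claim is exactly what fails here. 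The star example also shows that sampling with fixed probability $s/n$ cannot drive $W$ to $\emptyset$ in few rounds: once $W=\{w\}$ you need to hit $w$, which takes an expected $n/s$ rounds. The correct sampling probability is $s/|W|$, not $s/n$, precisely so the decay is self-sustaining at every scale.

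What Thorup and Zwick actually bound is the \emph{total} residual cluster mass, via a bunch-side double count. Writing $p=s/|W|$ and letting $S\subseteq W$ be the new sample,
\[
\E\Bigl[\sum_{w\in W}|C_{A\cup S}(w)|\Bigr]
=\sum_{v\in V}\;\sum_{w\in W\cap B_A(v)}\Pr\bigl[v\in C_{A\cup S}(w)\bigr],
\]
and for a fixed $v$, if you list $W\cap B_A(v)=\{w_1,w_2,\dots\}$ in nondecreasing order of $d(v,w_i)$, then $v\in C_{A\cup S}(w_i)$ forces $w_1,\dots,w_i\notin S$, giving $\Pr\le (1-p)^{i}$; so the inner sum is at most $\sum_{i\ge 1}(1-p)^i<1/p$ and the total is at most $n/p$. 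Since $W'\subseteq W$ by monotonicity and every $w\in W'$ contributes more than $4n/s$, Markov (or pigeonhole) on the total gives $\E[|W'|]\le (n/p)/(4n/s)=s/(4p)=|W|/4$. This yields $O(\log n)$ expected rounds, each adding $|W|p\le s$ vertices in expectation, and the $2s\log n$ bound follows. So the fix is not bookkeeping: you need to replace the per-cluster Markov argument by this aggregate bunch argument, and replace $s/n$ by $s/|W|$.
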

%

We will also make use of the following well known Lemma.
\begin{lemma}[Hitting set~\cite{DBLP:journals/siamcomp/AingworthCIM99, DoHaZw00}]\label{L-hitting-set}
Let $1\leq s \leq n$. Given $k$ sets $S_1,\ldots, S_k$, where $S_i\subseteq V$ and $|S_i|\ge s$, for every $i\in \{1,\ldots,k\}$, we can find a set $H\subseteq V$ of size $\Ot(n/s)$, such that $H\cap S_i \neq \emptyset$.
\end{lemma}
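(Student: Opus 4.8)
The plan is to prove this by the probabilistic method: build $H$ by independent random sampling and show that with positive probability the sampled set is simultaneously small and hits every $S_i$. First I would fix $p=\min\{1,(c\ln k)/s\}$ for a large enough absolute constant $c$, and let $H$ be the random set that contains each vertex of $V$ independently with probability $p$. The size is controlled by $\E[|H|]=pn\le cn\ln k/s$, so by Markov's inequality $|H|\le 4cn\ln k/s$ with probability at least $3/4$ (a Chernoff bound gives the same bound with failure probability $e^{-\Omega(n\ln k/s)}$, using $n/s\ge 1$, if one wants a high-probability construction rather than mere existence). This is $\Ot(n/s)$ once we treat $\log k$ as a polylogarithmic factor, which is legitimate here since in every application in this paper the number of sets is $k=\poly(n)$, hence $\log k=O(\log n)$.

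Next, for a fixed index $i$, since $|S_i|\ge s$ the set $H$ misses $S_i$ with probability
\[
\Pr[H\cap S_i=\emptyset]\le(1-p)^{|S_i|}\le(1-p)^{s}\le e^{-ps}\le e^{-c\ln k}=k^{-c}.
\]
A union bound over the $k$ sets then gives $\Pr[\exists i:\ H\cap S_i=\emptyset]\le k^{1-c}<1/4$ once $c\ge 2$. Combining the two estimates, with probability at least $1/2$ the sampled $H$ has size $\Ot(n/s)$ and hits all of $S_1,\dots,S_k$; in particular such an $H$ exists, which proves the lemma. If an explicit (deterministic) construction is wanted, these same two estimates are exactly what the method of conditional expectations needs; alternatively one can argue directly with the greedy rule that repeatedly adds the vertex lying in the most not-yet-hit sets — if $k'$ sets are still unhit, the incidences between them and $V$ number at least $k's$, so some vertex hits at least $k's/n$ of them, and after $t=\lceil(n/s)\ln k\rceil$ rounds at most $k(1-s/n)^t\le ke^{-ts/n}<1$ sets remain, yielding $|H|=O((n/s)\log k)$.

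The only genuine point to watch is that the two requirements on $H$ — being small and hitting everything — pull in opposite directions, and it is precisely the calibration $ps=\Theta(\log k)$ that balances the $e^{-ps}$ miss probability against the union bound over $k$ sets while keeping $pn=\Ot(n/s)$; there is no real obstacle beyond this calculation. One should, however, dispose of the degenerate regime $s\le c\ln k$, where the formula for $p$ would be truncated to $1$: there we simply take $H=V$, which trivially hits every $S_i$ and has size $n\le (n/s)\cdot c\ln k=\Ot(n/s)$.
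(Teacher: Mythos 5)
Your proof is correct. The paper states this lemma without giving a proof, delegating it to the cited references (Aingworth, Chekuri, Indyk and Motwani, and Dor, Halperin and Zwick); your argument is precisely the standard one from that literature, with the sampling calibration $ps=\Theta(\log k)$ giving the probabilistic version and the greedy refinement you sketch at the end giving the deterministic construction that the phrase ``we can find'' implicitly asks for. Two cosmetic remarks. First, the union bound $k^{1-c}<1/4$ with $c=2$ requires $k\ge 5$; for bounded $k$ the lemma is trivial anyway, since picking a single representative from each $S_i$ already gives $|H|\le k=O(1)\le \Ot(n/s)$ because $n/s\ge 1$. Second, as you yourself observe, the conclusion $|H|=\Ot(n/s)$ tacitly requires $\log k=\polylog(n)$; this holds in every application in the paper because the sets $S_i$ are indexed by vertices or by pairs of vertices, so $k=\poly(n)$, and the constant $\alpha$ in the paper's notational convention $\tilde{x}=\alpha x\log n$ is chosen exactly so that this slack is absorbed.
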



\ignore{
We show that using hitting set of lemma~\ref{L-hitting-set}, we can  to obtain the following routing scheme.
\begin{lemma}
Let $G=(V, E)$ be a unweighted graph and let $h > 0$. We can construct a routing scheme that uses $\max\{h, n/h\}$ information at each vertex and $O(\log n)$ header size, such that a message is routed from a source $u$ to a destination $v$ on a path with a length of $d(u,v) + 2(r_h(u)$.
\end{lemma}

\begin{proof}
We compute a hitting set $H$ (Lemma~\ref{L-hitting-set}) that hits the sets $B(u, h)$, where $u\in V$. For each vertex $w\in H$, we compute shortest path tree $T(u)$ rooted at $w$ and that spans $V$. For each of these trees, we compute tree routing.
\end{proof}
}

Abraham \etal \cite{DBLP:conf/spaa/AbrahamGMNT04} introduced a nice coloring technique and used it to obtain a name-independent routing scheme with stretch $3$ and routing tables of size $\Ot(\sqrt n)$.
Recently, Abraham and Gavoille~\cite{DBLP:conf/wdag/AbrahamG11} used this coloring technique to obtain a labeled routing scheme with stretch $(2,1)$  and routing tables of  size $\Ot(n^{3/4})$.
The next Lemma is implicit in \cite{DBLP:conf/spaa/AbrahamGMNT04,DBLP:conf/wdag/AbrahamG11}.

\begin{lemma}[Coloring~\cite{DBLP:conf/spaa/AbrahamGMNT04,DBLP:conf/wdag/AbrahamG11}]\label{L-coloring}
Let $0 < q \leq n$ be an integer. Let $S_1,\ldots, S_k$ be vertex sets each of size at least $\alpha  q \log n$, where $\alpha$ is a large enough constant. There is a coloring function  $c:V\rightarrow \{1,\ldots,q\}$ that satisfies the following:
\begin{enumerate}
\item For each $i\in \{1,\ldots,k\}$ and $j\in \{1,\ldots,q\}$, there exist $w\in S_i$ such that $c(w) = j$.
\item For each $j\in \{1,\ldots,q\}$ the number of vertices of color $j$ is $O( n / q)$.
\end{enumerate}
\end{lemma}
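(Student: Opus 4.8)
The plan is a one-line probabilistic argument. Take a \emph{balanced} random coloring: pick a uniformly random permutation of $V$ and color the first $\lceil n/q\rceil$ vertices $1$, the next $\lceil n/q\rceil$ vertices $2$, and so on. Property~2 then holds by construction, since every color class has at most $\lceil n/q\rceil=O(n/q)$ vertices, so only property~1 needs to be verified.

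Fix $i\in\{1,\dots,k\}$ and $j\in\{1,\dots,q\}$. The set of vertices of color $j$ is a uniformly random subset of $V$ of size $\lceil n/q\rceil$, so the probability that it is disjoint from $S_i$ equals $\binom{n-|S_i|}{\lceil n/q\rceil}/\binom{n}{\lceil n/q\rceil}\le(1-|S_i|/n)^{\lceil n/q\rceil}\le e^{-|S_i|/q}\le e^{-\alpha\log n}=n^{-\alpha}$, using $|S_i|\ge\alpha q\log n$ and $1-x\le e^{-x}$. In every application of the lemma the number $k$ of sets is polynomial in $n$ (they are vertex vicinities, bunches, or clusters, so $k\le n$), and $q\le n$, so there are at most $n^{O(1)}$ pairs $(i,j)$; choosing $\alpha$ to exceed this polynomial's degree by, say, $2$, a union bound shows that property~1 fails with probability at most $n^{-2}$. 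Hence a coloring with both properties exists, and indeed a random balanced one works with high probability.

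One could just as well analyze the fully independent coloring, where each $c(v)$ is picked uniformly and independently. The estimate for property~1 is unchanged ($(1-1/q)^{|S_i|}\le n^{-\alpha}$), while property~2 follows from a Chernoff bound: if $X_j$ counts the vertices of color $j$ then $\E[X_j]=n/q$ and $\Pr[X_j>3n/q]\le e^{-\Omega(n/q)}$, and the size hypothesis is actually helpful here, since $|S_i|\le n$ together with $|S_i|\ge\alpha q\log n$ forces $n/q\ge\alpha\log n$, so a union bound over the $q\le n$ colors gives $X_j=O(n/q)$ for every $j$ with high probability. If a deterministic coloring is wanted, derandomize either argument by the method of conditional expectations, coloring the vertices one at a time against a pessimistic estimator for the union of the bad events.

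The argument is routine, so there is no genuine obstacle; the only points needing care are the order in which the constants are fixed (the polynomial bound on $k$ must be available before $\alpha$ is chosen, so that the union bound for property~1 closes) and the observation, used implicitly above, that $|S_i|\ge\alpha q\log n$ together with $|S_i|\le n$ confines $q$ to the range in which $n/q$ dominates $\log n$ --- which is exactly what lets property~2 be stated as $O(n/q)$ rather than merely $\Ot(n/q)$.
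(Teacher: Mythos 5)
Your proposal is correct, and your primary argument (random balanced coloring via a uniformly random permutation) is genuinely different from the route taken in the paper, which colors each vertex independently and uniformly. The paper's argument handles property~1 exactly as in your second paragraph --- $(1-1/q)^{|S_i|}\le n^{-\alpha}$ plus a union bound over the $k\cdot q$ pairs --- and then invokes a Chernoff bound to control the sizes of the color classes. Your balanced-coloring variant is slightly cleaner: property~2 holds deterministically by construction, so the union bound only has to cover the $kq$ disjointness events, and the Chernoff step disappears. The trade is that the disjointness probability for a fixed color class is now a hypergeometric rather than a binomial tail, for which you correctly use $\binom{n-|S_i|}{m}/\binom{n}{m}\le(1-|S_i|/n)^m$; a small point to watch is that the color classes obtained from the permutation may have size $\lfloor n/q\rfloor$ rather than $\lceil n/q\rceil$, so the exponent should use $\lfloor n/q\rfloor$, but since the hypotheses force $n/q\ge\alpha\log n\ge 1$, this only costs a constant factor in the exponent, absorbed into $\alpha$. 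Your closing remark --- that $|S_i|\le n$ together with $|S_i|\ge\alpha q\log n$ forces $n/q\ge\alpha\log n$ --- is exactly the fact that the paper's Chernoff step silently relies on to conclude $O(n/q)$ rather than $O(n/q+\log n)$, and it is a worthwhile observation to make explicit; your balanced coloring also sidesteps it, since boundedness there needs no concentration at all. The paper also cites Abraham et al.\ for a deterministic construction, matching your remark about conditional expectations.
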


Abraham \etal \cite{DBLP:conf/spaa/AbrahamGMNT04} showed that such a deterministic coloring function can be computed efficiently.
For the sake of completeness we show that such a coloring function is formed, with high probability, from a random uniform coloring of the vertices of the graph. Assume that the number of sets $k$ is polynomial in $n$. Each vertex is colored by one of the $q$ colors with uniform probability.
Let $S\in \{ S_1,\ldots, S_k\}$ and $j\in \{1,\ldots,q\}$. Consider the event that there is no vertex in $S$ with color $j$. The probability for this event to happen is bounded by $(1-1/q)^s$ which is at most $(1/n)^{\alpha}$. From the union bound we get that the first requirement holds with probability $1-k\cdot q \cdot (1/n)^{\alpha}$.
The expected number of vertices with the same color is $n / q$, and from Chernoff bound it follows that the second requirement holds with high probability. Using the union bound to bound the probability that one of the  requirements does not hold   we get that the above function satisfies, whp, both requirements.

Finally, to simplify the presentation we define  $\tilde x= \alpha x \log n $, where $x>0$ is an integer and  $\alpha$ is  a large enough constant of our choice.

\section{New routing techniques}\label{S-routing-building-blocks}

In this section we present two new techniques for routing between predefined vertex sets. We will use these techniques in the next sections to obtain our new routing schemes. The techniques are presented in two lemmas.
We first show that for a vertex partition that satisfies certain properties it is possible to route efficiently between vertices of the same set in the partition.
%
%


\begin{lemma}\label{L-technique-hitting-set} Let $G=(V, E, w)$ be a weighted undirected graph, where  $w:E\rightarrow \mathds{R}$.
Let $\mathcal{U} = \{U_1, \ldots, U_q\}$ be a partition of $V$ into $q$ sets, each of size $\Ot(n / q)$. For every $\eps >0$, there is a routing scheme that uses a header of $O(\frac{1}{\eps} \log n + \log^2 n /\log \log n)$-bit to route a message between any pair of vertices of the same set in partition $\mathcal{U}$, on a $(1 + \eps)$-stretch path.
The routing table stored at each vertex is of size $\Ot(\frac{1}{\eps}(n/q) + q)$.
\end{lemma}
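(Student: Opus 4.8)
The plan is to equip every vertex with three kinds of data: its own vicinity together with shortest-path first hops, a global landmark set each of whose members carries a full shortest-path tree, and, for every vertex lying in its own part of $\mathcal{U}$, a short precomputed list of routing instructions. Concretely, set $\ell=\lceil n/q\rceil$ and, as in Lemma~\ref{L-route-to-closest-h}, store at each $u$ the set $B(u,\ell)$ with the first edge of a shortest path to every member; this uses $\Ot(n/q)$ space and, by Property~\ref{P-sp}, lets a message be forwarded along a shortest path to any prescribed member of $B(u,\ell)$ (every internal vertex of such a path also holds the required first hop). Apply Lemma~\ref{L-hitting-set} to $\{B(u,\ell)\}_{u\in V}$ to get $H$ with $|H|=\Ot(q)$ meeting every $B(u,\ell)$; for each $h\in H$ build a shortest-path tree $T(h)$ spanning $V$ and install on it the tree-routing scheme of Lemma~\ref{L-trees}, costing each vertex $O(\log^2 n/\log\log n)$ bits per tree and hence $\Ot(q)$ in total. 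The label of a vertex $v$ records its name and the index of its part of $\mathcal{U}$.

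At preprocessing time, for every pair $u,v$ inside a common part $U_i$ we compute a list $I(u,v)$ consisting of $O(1/\eps)$ vertex names plus one final landmark name and $v$'s label in that landmark's tree, i.e.\ $O(\tfrac1\eps\log n+\log^2 n/\log\log n)$ bits; there are only $\Ot(n/q)$ such lists per vertex, so the three parts together give the stated table size $\Ot(\tfrac1\eps(n/q)+q)$. To build $I(u,v)$, fix a shortest $u$--$v$ path $P$ and carry a pivot $p$ along $P$ starting at $u$. If $v\in B(p,\ell)$, record ``deliver directly'' and stop. Otherwise let $R_p$ be the distance from $p$ to its $\ell$-th closest vertex; since $v\notin B(p,\ell)$ we have $R_p\le d(p,v)$, and since $H$ meets $B(p,\ell)$ the nearest landmark $p_H(p)$ lies in $B(p,\ell)$ with $d(p,p_H(p))\le R_p$. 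If $R_p\le\tfrac\eps2 d(u,v)$, record the ``landmark jump'' $p\to p_H(p)\to v$ (second leg via $T(p_H(p))$) and stop. Otherwise advance along $P$: let $p'$ be the farthest vertex of $P$ at or beyond $p$ still lying in $B(p,\ell)$ and let $x$ be the next vertex of $P$ after $p'$; if $p'\ne p$ record the name $p'$ (a ``ball move''); if $p'=p$ or the edge $(p',x)$ already has length $>R_p/2$, also record $x$ and move $p$ across $(p',x)$, otherwise move $p$ to $p'$. A short calculation (using $d(p,x)\ge R_p$ because $x\notin B(p,\ell)$) shows every advance decreases $d(p,v)$ by at least $\tfrac\eps4\,d(u,v)$ while recording at most two names, so after $O(1/\eps)$ advances $d(p,v)\le\tfrac\eps2 d(u,v)$, whence $R_p\le d(p,v)$ triggers the landmark-jump rule; thus $|I(u,v)|=O(1/\eps)$.

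To route from $u$ to $v$ in the same part, $u$ copies $I(u,v)$ into the header and the message executes the instructions in order: a ``ball move to $p'$'' is carried out by Lemma~\ref{L-route-to-closest-h} until $p'$ is reached and pops the instruction; an edge instruction is executed by sending across the named incident edge; the terminal ``landmark jump'' reaches $p_H(p)$ by one more ball move (legal since $H$ meets $B(p,\ell)$) and then routes inside $T(p_H(p))$ to $v$ using the carried tree-label and Lemma~\ref{L-trees}. All recorded edges and ball-move shortest paths join consecutive vertices of $P$ in order, so their lengths telescope to $d(u,p_k)$ for the last pivot $p_k$; the landmark jump adds $d(p_k,p_H(p_k))+d(p_H(p_k),v)\le 2R_{p_k}+d(p_k,v)\le\eps\,d(u,v)+d(p_k,v)$, for a total of at most $(1+\eps)\,d(u,v)$ (and exactly $d(u,v)$ on the ``deliver directly'' branch). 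Throughout, the header holds only the $O(1/\eps)$ unconsumed instructions, a single vertex name, one tree-label, and a tree-routing header, i.e.\ $O(\tfrac1\eps\log n+\log^2 n/\log\log n)$ bits.

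I expect the delicate point to be exactly the advance step in the presence of large edge weights: a single heavy edge of $P$ can make an ordinary ball move gain almost nothing, which is why the construction must be allowed to spend a second name to step across such an edge, and one has to verify that the ``$d(p,v)$ drops by $\ge\tfrac\eps4 d(u,v)$ per advance'' invariant really holds in every case (first edge out of $p$ leaving $B(p,\ell)$, heavy next edge, light next edge), so that the instruction count stays $O(1/\eps)$ and the stretch telescopes as above. The rest is bookkeeping: confirming that the header/label split and the per-tree state of Lemma~\ref{L-trees} stay inside the advertised budget while the instruction list is consumed one item at a time, and handling degenerate cases ($p'=p$, $v$ adjacent to the current pivot, $q$ close to $1$ or to $n$).
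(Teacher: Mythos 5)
Your proof is correct and follows essentially the same strategy as the paper's: precompute a short $O(1/\eps)$-length waypoint list along a shortest $u$--$v$ path, execute it by ball moves and single-edge hops, and once the remaining ball radius drops below $\Theta(\eps)\,d(u,v)$, jump to a nearby hitting-set landmark and finish inside its shortest-path tree via Lemma~\ref{L-trees}. The only notable (but immaterial) deviation is that you swap the roles of $q$ and $n/q$---you take balls of size $\approx n/q$ and a hitting set of size $\Ot(q)$, whereas the paper takes balls $B(u,\tilde q)$ of $\Ot(q)$ vertices and a hitting set of size $\Ot(n/q)$---together with a slightly more refined advance rule that spends a second waypoint only across heavy boundary edges; since the $\Ot(\frac{1}{\eps}(n/q)+q)$ space bound and the telescoping stretch argument are symmetric under that swap, both parameterizations give the same result.
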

\begin{proof}
$\;$\\
{\bf Preprocessing:}
For each $u\in V$, we store $B(u, \tilde q)$. We compute a hitting set $H$ as in Lemma~\ref{L-hitting-set} of size $\Ot(n/q)$ that hits for each $u\in V$ the set $B(u, \tilde q)$. For each vertex $w\in H$, we compute a shortest path tree $T(w)$ rooted at $w$ that spans $V$. For each such $T(w)$, we store at each vertex $u\in V$ the tree routing information of $T(w)$ as needed for the tree routing schemes of Lemma~\ref{L-trees}. Since $|H| = \Ot(n/q)$ and the tree routing information is of poly-logarithmic size, the total storage at each vertex for this purpose is $\Ot(n/q)$.

Let $U\in \mathcal{U}$ and let $u\in U$. For every $v\in U$, we compute a sequence $\langle x_1, \ldots, x_{b'} \rangle$ of vertices. The sequence is computed using the following process:

Let $b=\lceil 2/\eps \rceil$. Set $x_0 = u$ and $s=d(u,v) / b$. Let $x_i$ ($i\geq 0$) be the last vertex added to the sequence so far. If $v\in B(x_i, \tilde q)$ then set $x_{i+1}$ to $v$ and stop. If $v\notin B(x_i, \tilde q)$ then let $(y_i,z_i)$ be an edge on a shortest path from $x_i$ to $v$ such that $y_i\in B(x_i, \tilde q)$ and $z_i\notin B(x_i, \tilde q)$. Proceed as follows:

\begin{itemize}
\item If $z_i = v$ then set $x_{i + 1} = y_i$,  $x_{i + 2} = v$ and stop.
\item If $d(x_i, z_i) < s$ then pick $w\in H$ such that $w\in B(x_i, \tilde q)$, set $x_{i+1} = w$ and stop.
\item Otherwise ($d(x_i,z_i) \ge s$) set $x_{i + 1} = y_i$,  $x_{i + 2} = z_i$ and continue to add vertices to the sequence with $x_{i+2}$ being now the last added vertex.
\end{itemize}
%
%
%
%
%

In this process we add vertices to the sequence as long the progress towards $v$ exceeds the threshold value $s$. Therefore, the process has at most $d(u,v)/s \leq b$ rounds. Since in each round at most $2$ vertices are added to the sequence its total size is at most $2b$. At $u$ we store for $v$ the sequence $\langle x_1, \ldots, x_{b'} \rangle$. In case that $x_{b'}\neq v$ then it must be that $x_{b'}\in H$ and we also store at $u$ the label of $v$ in $T(x_{b'})$. Since $|U|=\Ot(n / q)$, the total storage required at $u$ for this information is $\Ot(b\cdot (n / q))$.

{\bf Routing:} Given a set $U\in \mathcal{U}$ and a pair of vertices $u,v\in U$  a message is routed  from $u$ to $v$ as follows.
First, $u$ obtains the sequence $\langle x_1, \ldots, x_{b'} \rangle$ for $v$ from its routing table and adds it to the message header\footnote{In the case that $x_{b'}\not=v$, it adds also the label of $v$ in $T(x_{b'})$ that was stored at $u$ in the preprocessing.}. It also sets $x_1$ to be a temporary target. Let $u'$ be the current vertex on the routing path and let $x_i\neq u'$ be the temporary target. Assume that we are either in the case that $i < b'$ or in the case that $i=b'$ and  $x_{b'}=v$. By the way $\langle x_1, \ldots, x_{b'} \rangle$ was constructed it follows that either  $(u', x_i)\in E$ or $x_i\in B(u', \tilde q)$. In the case that $(u', x_i)\in E$ then we can route from $u'$ to $x_i$ using a direct link\footnote{The standard routing scheme model of \cite{PeUp89} assumes that at $u$ given a neighbor $v$ it is possible to obtain the link that connects $u$ to $v$. We can even avoid this assumption by storing in the sequence edges instead of vertices when needed.}. In the case that $x_i\in B(u', \tilde q)$, it follows from Lemma~\ref{L-route-to-closest-h} that we can route to $x_i$ on a shortest path.
Each time a temporary target $x_i\not= v$ receives the message, it sets $x_{i + 1}$ to be the next temporary target. For the case that $i = b'$ and $x_{b'}\neq v$, it follows from the construction  of $\langle x_1, \ldots, x_{b'} \rangle$  that $x_{b'}\in H$. The routing from $x_{b' - 1}$ to $v$ is done using the tree $T(x_{b'})$.
This routing procedure requires a header of $O(\frac{1}{\eps} \log n + \log^2 n /\log \log n)$-bit.

We now turn to analyze the stretch of the routing path. By the sequence construction, until $x_{b' - 1}$, the message is routed on a shortest path. In the case that $x_{b'}=v$ then the message is routed from $x_{b'-1}$ to $x_{b'}$  on a shortest path as well. In the case that $x_{b'}\neq v$ the message is routed from $x_{b' - 1}$ to $v$ on $T(x_{b'})$, where $x_{b'}\in H\cap B(x_{b'-1}, \tilde q)$. We have added $x_{b'}$ to the sequence because $d(x_{b' - 1}, z_{b' - 1}) < d(u,v) / b $. We also know that $z_{b' - 1}\notin B(x_{b' - 1}, \tilde q)$, thus,  $d(x_{b' - 1}, x_{b'}) \leq d(x_{b' - 1}, z_{b' - 1}) <  d(u,v) / b $. The length of the routing path is bounded by $d(u, x_{b' - 1}) + d(x_{b' - 1}, x_{b'}) + d(x_{b'}, v)$. Using the triangle inequality we bound $d(x_{b'}, v)$ with $d(x_{b' - 1}, x_{b'}) + d(x_{b'-1}, v)$ and get: $$d(u, x_{b' - 1}) + d(x_{b' - 1}, x_{b'}) + d(x_{b'}, v) \leq d(u, x_{b' - 1}) + 2d(x_{b' - 1}, x_{b'}) + d(x_{b' - 1}, v).$$ Since $x_{b' - 1}$ is on a shortest path between $u$ and $v$ it follows that $d(u, x_{b' - 1}) + d(x_{b'-1}, v)=d(u,v)$ and we get:
$$d(u, x_{b' - 1}) + 2d(x_{b' - 1}, x_{b'}) + d(x_{b' - 1}, v) \leq d(u,v) + 2d(u,v) / b.$$ Since $b=\lceil 2/\eps \rceil$ we get that the message traverses a $(1 + \eps)$-stretch path.
\end{proof}

Next we extend Lemma~\ref{L-technique-hitting-set} to the case that the possible destinations are from a subset $W$ of $V$.

\begin{lemma}\label{L-color-routing-W} Let $G=(V, E, \omega)$ be a weighted undirected graph, where  $\omega:E\rightarrow \mathds{R}^{+}$. Let $\mathcal{W} = \{W_1, \ldots, W_q\}$ be a partition of $W\subseteq V$ into $q$ sets, each of size $\Ot(|W| / q)$.
Let $\mathcal{U} = \{U_1, \ldots, U_q\}$ be a partition of $V$ into $q$ sets, such that $U \cap B(u, \tilde q)\neq \emptyset$, for every $U\in \mathcal{U}$ and $u\in V$.
For every $\eps >0$, there is a routing scheme that uses a header size of $O(\frac{1}{\eps}\log (nD))$-bit, and for every $1 \leq i\leq q$ routes a message from any vertex of $U_i$ to any vertex of $W_i$ on a $(1 + \eps)$-stretch path.
The routing table stored at each vertex is of size $\Ot(q + \frac{1}{\eps} \cdot \log (D)(|W|/q))$.
%
%
\end{lemma}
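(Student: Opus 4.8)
The plan is to adapt the proof of Lemma~\ref{L-technique-hitting-set}, with the key difference being how we handle the "escape into a tree" step at the end of the sequence, since now the destination $v$ lives in $W_i$ rather than in the same partition set $U$ as the source. The natural idea is to build, for each color class $W_j$, a single shortest path tree (or forest) that spans $W_j$ and is used by all sources in $U_j$; the header will carry the index $i$ of the common color, so the routing algorithm at an intermediate vertex knows which tree to use. First I would set up the preprocessing: each vertex $u$ stores $B(u,\tilde q)$ and its associated first edges (via Lemma~\ref{L-route-to-closest-h}); we compute a hitting set $H$ of size $\Ot(n/q)$ hitting all $B(u,\tilde q)$, and shortest path trees $T(w)$ rooted at each $w\in H$ spanning $V$, storing the tree-routing tables of Lemma~\ref{L-trees} at every vertex. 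This costs $\Ot(n/q)$ per vertex. Then, for each $j\in\{1,\dots,q\}$ and each $v\in W_j$, and for each $u\in U_j$, we run essentially the same sequence-construction process as before to obtain $\langle x_1,\dots,x_{b'}\rangle$ with $b=\lceil 2/\eps\rceil$, stored at $u$. Since $|W_j|=\Ot(|W|/q)$ and each sequence has $O(1/\eps)$ entries of $O(\log(nD))$ bits, the per-vertex storage for this is $\Ot(\frac{1}{\eps}(|W|/q))$.

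The one subtlety is the final hop. In Lemma~\ref{L-technique-hitting-set} the sequence ends either at $v$ itself or at a vertex of $H$, from which we route to $v$ on the tree $T(x_{b'})$ rooted at that $H$-vertex — and crucially the label of $v$ in $T(x_{b'})$ is stored at the source $u$ and pushed into the header. That works verbatim here, since $T(w)$ for $w\in H$ spans all of $V\supseteq W$, so $v$ has a label there; the source $u$ stores this label along with the sequence. This is why we still only need the hitting set $H$ over the $B(u,\tilde q)$'s and do not need per-color trees at all — the hypothesis $U\cap B(u,\tilde q)\neq\emptyset$ for every $U$ is not yet used. The hypothesis enters precisely to guarantee the "escape edge" $(y_i,z_i)$ exists with $y_i\in B(x_i,\tilde q)$: actually that is automatic as long as $v\notin B(x_i,\tilde q)$. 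So where does $U\cap B(u,\tilde q)\neq\emptyset$ come in? I expect it is needed so that the routing algorithm, sitting at an intermediate vertex $u'$ that only knows the color $i$ (not the source's full routing table), can locate a vertex of $U_i$ inside $B(u',\tilde q)$ to bootstrap — i.e. the header carries $i$, and each carrier on the path, if it has lost the explicit sequence or needs to recover, can find a $U_i$-vertex nearby. More carefully: I would have the header carry the color $i$ plus the suffix of the sequence not yet consumed; the sequence itself is handed off just as in Lemma~\ref{L-technique-hitting-set}, and the $\log D$ factor in the table size and the $\log(nD)$ in the header come from storing $O(1/\eps)$ distance-scale-indexed pieces of information (the factor $b=\Theta(1/\eps)$ sequence entries, each naming a vertex and possibly a tree label, over $\log D$ distinct geometric scales of $d(u,v)$ — or from a bucketing of destinations by distance).

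The stretch analysis is identical to that of Lemma~\ref{L-technique-hitting-set}: the message travels on shortest paths through $x_1,\dots,x_{b'-1}$ (each hop being either a direct edge or a within-$B(\cdot,\tilde q)$ shortest path routable by Lemma~\ref{L-route-to-closest-h}), and the only slack is the last detour $x_{b'-1}\to x_{b'}\to v$ with $x_{b'}\in H$; since $x_{b'}$ was chosen only when $d(x_{b'-1},z_{b'-1})<d(u,v)/b$ and $z_{b'-1}\notin B(x_{b'-1},\tilde q)$ forces $d(x_{b'-1},x_{b'})<d(u,v)/b$, the triangle inequality gives total length $\le d(u,v)+2d(u,v)/b=(1+\eps)d(u,v)$. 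The main obstacle I anticipate is pinning down exactly how the intermediate-vertex routing works when the header can only afford $O(\frac1\eps\log(nD))$ bits and the carriers other than $u$ do not hold the stored sequence for $v$: one must argue that pushing the whole (length-$O(1/\eps)$) sequence plus one tree label into the header suffices, and that the hypothesis $U\cap B(u,\tilde q)\neq\emptyset$ (together with $u$ storing $B(u,\tilde q)$) lets each carrier advance to the next sequence vertex without any further stored information about $v$. I would organize the write-up so that this handoff invariant is stated explicitly before the stretch calculation.
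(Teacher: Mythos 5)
Your proposal has a genuine gap: it does not achieve the space bound stated in the lemma, and it misses the central mechanism of the paper's proof. You construct a hitting set $H$ of size $\Ot(n/q)$ and store, at \emph{every} vertex, the tree-routing tables for shortest-path trees $T(w)$ spanning all of $V$ for every $w\in H$. This costs $\Ot(n/q)$ per vertex. But the lemma claims routing tables of size $\Ot(q + \frac{1}{\eps}\log(D)\,|W|/q)$, and in the applications $q$ is well below $\sqrt n$ (e.g.\ $q=n^{1/3}$ in Theorem~\ref{T-5+e}), so $\Ot(n/q)$ vastly exceeds the budget. The entire point of Lemma~\ref{L-color-routing-W} is to eliminate the $\Ot(n/q)$ hitting-set cost from Lemma~\ref{L-technique-hitting-set}, and your proof re-introduces it.

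The hypothesis $U\cap B(u,\tilde q)\neq\emptyset$ is not a side condition for intermediate-vertex bootstrapping, as you guess; it is precisely the replacement for the hitting set. In the paper's proof, when the sequence-building process at $x_i$ stalls (the next step gains less than the threshold $s$), instead of jumping to a hitting-set vertex with a global tree, it ends the subsequence at a vertex $z\in B(x_i,\tilde q)\cap U_j$. Since $z\in U_j$ and $w\in W_j$, $z$ stores its own sequence for $w$, so the message re-launches from $z$ with fresh routing information. To make this recursion terminate with bounded overhead, each stored sequence is in fact a concatenation of $O(\log(nD))$ subsequences built with geometrically doubling thresholds $s=2/b, 4/b, 8/b, \ldots$ — this is exactly where the $\log D$ factor in the table size and header size comes from, and it must be constructed explicitly rather than gestured at. One then proves a progress claim (the paper's Claim~\ref{C-1}): each hop $r_i\to r_{i+1}$ between $U_j$-vertices strictly decreases $d(\cdot,w)$ by $\alpha_i(1-1/b)$ while costing at most $\alpha_i(1+1/b)$, and summing this telescoping estimate gives the $(1+\eps)$ stretch. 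Your stretch analysis (a single final detour of $2d(u,v)/b$) is the analysis from Lemma~\ref{L-technique-hitting-set} and does not apply here, because there is no single final tree-hop — there is an a priori unbounded chain of $U_j$-vertices whose cumulative detour must be controlled.
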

\begin{proof}
$\;$\\
{\bf Preprocessing:}

We first assume that $\omega:E\rightarrow [1,M]$.
Similarly to Lemma~\ref{L-technique-hitting-set}, for each vertex $u\in V$ we store $B(u, \tilde q)$. It requires $\Ot(q)$ space at each vertex.

Let $b = \lceil\frac{2}{\eps}\rceil + 1$. For $1\leq j \leq q$, let $U_j \in \mathcal{U} $ and let $W_j\in \mathcal{W}$. Given a vertex $u\in U_j$ and a vertex $w\in W_j$, we store at $u$ a sequence of vertices that will be used to route towards $w$.
Such a sequence has $O(\log (Mn))$ subsequences. A subsequence $\langle x_1, \ldots, x_{b'} \rangle$ is computed as follows.

Let $x$ be a start vertex and let $s$ be a threshold value that are given as an input to the computation. Set $x_0=x$. Let $x_i$ ($i\geq 0$) be the last vertex added to the subsequence so far. If $w\in B(x_i, \tilde q)$ then set $x_{i+1}$ to $w$ and stop. If $w\notin B(x_i, \tilde q)$ then let $(y_i,z_i)$ be an edge on a shortest path from $x_i$ to $w$ such that $y_i\in B(x_i, \tilde q)$ and $z_i\notin B(x_i, \tilde q)$. Proceed as follows:

\begin{itemize}
\item If $z_i = w$ then set $x_{i + 1} = y_i$,  $x_{i + 2} = w$ and stop.
\item If $d(x_i, z_i) < s$ then there is a vertex $z\in B(x_i, \tilde q)\cap U_j$, set $x_{i+1} = z$ and stop.
\item Otherwise ($d(x_i,z_i) \ge s$) set $x_{i + 1} = y_i$,  $x_{i + 2} = z_i$. If the subsequence is of size $2b$ stop, else continue to add vertices to the subsequence with $x_{i+2}$ being now the last added vertex.
\end{itemize}

Now, the sequence stored at $u$ for $w$ is computed as follows. Let $P(u,w) = \{ u= u_0, u_1, u_2, \ldots, u_t = w \}$ be a shortest path between $u$ and $w$.
We start by adding $u_1$ to the sequence. If $u_1=w$ we stop, otherwise, we add $u_2$ to the sequence as well. Next, if $u_2\neq w$, we start to produce subsequences and add them to the sequence. The first subsequence is produced using $x=u_2$ as the start vertex and $s=2 / b$ as the threshold value. Let $s'$ be the threshold value that was used to produce the last subsequence added so far to the sequence and let $w'$ be its last vertex.
If this subsequence has exactly $2b$ vertices and $w'\neq w$ then a new subsequence is produced  and added to the sequence using $x=w'$ as the start vertex and $s=2s'$ as the threshold value.

Each subsequence added to the sequence has at most $2b$ vertices. In every subsequence, beside maybe the last one, all the vertices are on $P(u,w)$. Moreover, if the $i$th subsequence added to the sequence is not the last subsequence its last vertex is at a distance of at least $2^i$ from its first vertex. Since the length of every shortest path is less than $Mn$ the total number of subsequences cannot exceed $\log (nM)$ and therefore a sequence can have at most $2b \log (Mn)+2$ vertices. Each set in $\mathcal{W}$ is of size $|W| / q$, thus, each $u\in U_j$ stores $|W| / q$ sequences which results in $O(b (\log Mn)|W| / q ) = \Ot(\frac{1}{\eps} (\log M)|W| / q )$ in total.

Next, we show that the preprocessing algorithm works also on graphs with non-negative real edge weights.
Let $G=(V,E,\omega)$ be a weighted graph with $\omega:E\rightarrow \mathds{R}^{+}$ and let  $D$ be its normalized diameter. Let $E'=\{ (u,v)\in E \mid \omega(u,v)=d(u,v)\}$. Let $\omega_{\max} = max \{\omega(u,v) \mid (u,v)\in E'\}$, and let $\omega_{\min} = min \{\omega(u,v) \mid (u,v)\in E'\}$.
For every $(u,v)\in E'$ let $\omega'(u,v)= \omega(u,v) / \omega_{\min}$. It is easy to see that $\omega':E'\rightarrow [1,M]$, where $M=\frac{\omega_{\max}}{\omega_{\min}}$.
Notice that all shortest paths in $G'=(V,E',w')$ are also shortest paths in $G$. Applying the preprocessing algorithm on $G'$ we get that
each $u\in U_j$ stores $|W| / q$ sequences each of size $O(b \log (Mn))$. Since $\omega_{\max}\leq \max_{u,v}d(u,v)$ and $\omega_{\min} = \max_{u\neq v}d(u,v)$ it follows that $M=\frac{\omega_{\max}}{\omega_{\min}}\leq D$.
Therefore, the total space stored at a vertex is $\Ot(\frac{1}{\eps}(\log D) |W| / q )$.
{\bf Routing:} Given a set $U_j \in \mathcal{U}$ and a set $W_j\in \mathcal{W}$, and a pair of vertices $u \in U_j$ and $w \in W_j$, we show how to route a message from $u$ to $w$.
We route towards the last vertex of the sequence stored at $u$ for $w$ in the same manner as in Lemma~\ref{L-technique-hitting-set}. It follows from the sequence construction that all its vertices, beside maybe
the last, are on a shortest path between $u$ and $w$. The last vertex is either $w$ or a vertex that belongs to $U_j$. Let $r_0=u$. For $i\geq 0$, let $r_{i+1}$ be the last vertex of the sequence stored at $r_{i}$ for $w$, and let $r_{i}'$ be the vertex that precedes $r_{i+1}$ in the sequence. If $r_{i+1}\neq w$ then $r_{i+1}\in U_j$. Let $\alpha_i = d(r_i, r_i')$.

\begin{claim}\label{C-1}
For every $i\geq 0$, $d(r_{i+1},w)<d(r_i,w)$. Moreover, $d(r_{i + 1}, w) \leq d(r_i, w) -  (\alpha_i -  \alpha_i / b)$.
\end{claim}
\begin{proof}
Let $i \ge 0$. Consider the routing of a message from $r_i$ to $r_{i+1}$. If $r_{i + 1} = w$ then the message is routed on a shortest path, therefore $d(r_{i+1},w)=d(w,w)<d(r_i,w)$ and also $d(r_{i+1},w) = 0 \leq d(r_i, w) - d(r_i, r_i')\leq d(r_i, w) -  (\alpha_i -  \alpha_i / b)$. Otherwise, we route from $r_i\in U_j$ to a vertex $r_{i + 1}\in U_j$. The sequence stored at $r_i\in U_j$ for $w\in W_j$ contains the first two vertices of a shortest path from $r_i$ to $w$ and then at least one subsequence.
Let $k\geq 1$ be the number of subsequences of this sequence. The last vertex of this sequence is $r_{i+1}$, and $r_i'$ is the vertex that precedes $r_{i+1}$ in the sequence. From the sequence construction it follows that $r_i'$ is on a shortest path between $r_i$ and $w$. Moreover, $r_i'$ is either the last vertex of the $(k-1)$th subsequence or a vertex of the $k$th subsequence or the second vertex of the sequence.

The message is routed from $r_i$ to $r_i'$, and then from $r_i'$ to $r_{i + 1}$. From the way that subsequences are produced it follows that $\alpha_i = d(r_i, r_i') \geq 2 + (2 + 4 + \ldots + 2^{k-1}) = 1 + (1 + 2 + \ldots + 2^{k-1}) = 2^{k}$. Moreover, it follows from the threshold used to produce the $k$th subsequence that $d(r_i', r_{i + 1}) \leq 2^{k} / b \leq \alpha_i / b$.
Since $r_i'$ is on a shortest path between $r_i$ and $w$ it follows that $d(r_i', w)=d(r_i, w) - d(r_i, r_i')$. From the triangle inequality it follows that $d(r_{i + 1}, w) \leq d(r_{i + 1}, r_i') + d(r_i', w)$. We get:

$$d(r_{i + 1}, w) \leq d(r_{i + 1}, r_i') + d(r_i', w)= d(r_{i + 1}, r_i') + d(r_i, w) - d(r_i, r_i') \leq  d(r_i, w) - (\alpha_i -  \alpha_i / b),$$ as required.
%
%

%
%
\end{proof}

Next, we show that the stretch of the routing path is $(1 + \eps)$.
From Claim~\ref{C-1} it follows that the message eventually reaches a vertex $r_t$ that $w$ is the last vertex in its sequence.
For every $0 \leq i < t$, the message is routed from $r_i$ to $r_{i+1}$ on a path of length at most $\alpha_i+\alpha_i / b$. For $i=t$ the message is routed on a shortest path between $r_t$ and $w$.
Let $A=\sum_{i=0}^{t-1}\alpha_i$. The message traverses a path of length at most
$A(1+1/b)+d(r_t,w)$.

From Claim~\ref{C-1}  it follows that $d(r_t,w)\leq d(r_{t-1}, w) - (\alpha_{t-1} -  \alpha_{t-1} / b)$. By applying Claim~\ref{C-1} recursively we get $d(r_t,w)\leq d(u, w) - \sum_{i=0}^{t-1}(\alpha_i-\alpha_i/ b)=d(u, w) - A(1- 1/ b)$. From this we get that $A(1- 1/ b) \leq d(u, w) - d(r_t,w)\leq d(u, w)$ and $A\leq \frac{b}{b-1} d(u, w)$. We can now bound the length of the path that the message traverses:
\begin{align*}
 A(1+1/b)+d(r_t,w)&\leq A(1+1/b)+ d(u, w) - A(1- 1/ b)  \\
 &\leq d(u,w)+2A/b\\
 &\leq d(u,w)+2d(u,w)/(b-1)\\
 &=(1+\frac{2}{b-1})d(u,w)\\
 &\leq (1+\eps)d(u,w)
\end{align*}
\end{proof}

\section{Applications for small stretch routing schemes}\label{S-new-routing-scheme}

In this section we show how to obtain small stretch routing schemes, that almost match the corresponding state-of-the-art distance oracles, using the routing techniques from the previous section. To exemplify the strength of our techniques we first show as a warm-up a new $(3+\eps)$-stretch labeled routing scheme with $\Ot(\frac{1}{\eps})$-bit header size and $\Ot(\frac{1}{\eps}\sqrt n)$-space routing tables. Using the hash function presented in~\cite{DBLP:conf/spaa/AbrahamGMNT04} it is easy to modify this routing scheme to be name-independent. Then we turn to present the main results of this section, our $(2+\eps,1)$-stretch and $(5+\eps)$-stretch routing schemes.

\paragraph{Stretch $3+\eps$.}

Let $q=\sqrt n$. Let $c$ be a coloring function with $q$ colors obtained using Lemma~\ref{L-coloring} with respect to the sets $B(u, \tilde q)$, for every $u\in V$. Let $\mathcal{U} = \{U_1, \ldots, U_q\}$ be the color sets induced by $c$. It follows that $\mathcal{U}$ is a partition and every $U\in \mathcal{U}$ has $\Ot(\sqrt n)$ vertices.
We use Lemma~\ref{L-technique-hitting-set} with $\mathcal{U}$. The information stored for this at every vertex is of size $\Ot(\frac{1}{\eps}\sqrt n)$.
Every $u\in V$ stores $B(u, \tilde q)$. It follows from Lemma~\ref{L-coloring} that $B(u, \tilde q)$ contains a vertex of each color.
For every $i\in \{1,\ldots,q\}$ we store at $u$ a vertex from $B(u, \tilde q)$ of color $i$.
For every $v\in V$ the label of $v$ contains $v$ and $c(v)$. A message is routed from $u$ to $v$ as follows. If $v\in B(u, \tilde q)$ it follows from Lemma~\ref{L-route-to-closest-h} that we can route the message from $u$ to $v$ on a shortest path. If $v \notin B(u, \tilde q)$ then, using the information that is saved at $u$, we route to $w\in B(u, \tilde q)$ with color $c(v)$. Again from Lemma~\ref{L-route-to-closest-h} it follows that we can route from $u$ to $w$ on a shortest path. We route from $w$ to $v$ using Lemma~\ref{L-technique-hitting-set}. We now bound the stretch. From Lemma~\ref{L-technique-hitting-set} it follows that the message is sent from $w$ to $v$ on a path of length at most $(1+\eps)d(w,v)$. The message traverses a path of length at most $d(u,w)+(1+\eps)d(w,v)$. From the triangle inequality it follows that $d(w,v)\leq d(w,u)+d(u,v)$. Also, since $v\notin B(u, \tilde q)$ and $w\in B(u, \tilde q)$ it follows that $d(u,w)\leq d(u,v)$. Therefore, $d(w,v)\leq 2d(u,v)$, and we get that $d(u,w)+(1+\eps)d(w,v)\leq (3+2\eps)d(u,v)$.


\paragraph{Stretch $(2+\eps,1)$.}

We now present a $(2+\eps,1)$-stretch routing scheme with $o(\log^2 n)$-bit labels, $\Ot(1/\eps)$-bit headers and routing tables of $\Ot(\frac{1}{\eps} n^{2/3})$ size. This almost matches the $(2,1)$-stretch distance oracle with $\Ot(n^{5/3})$ total space of~\cite{PaRo14}.

\begin{theorem}\label{T-two-mult-one-add} There is a routing scheme for unweighted undirected graphs with $o(\log^2 n)$-bit labels, that uses $\Ot(\frac{1}{\eps} n^{2/3})$ space at each vertex and $\Ot(1/\eps)$-bit header size, to route a message between any pair of vertices on a $(2 + \eps,1)$-stretch path.
\end{theorem}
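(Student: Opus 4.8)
The plan is to port the $(2,1)$-stretch distance oracle of \patrascu\ and Roditty~\cite{PaRo14} into the routing model, spending a $(1+\eps)$ factor (and a few extra header bits) to make up for the fact that a router sees only its own table and the destination's short label, and using the coloring of Lemma~\ref{L-coloring} together with Lemma~\ref{L-technique-hitting-set} to keep the tables at $\Ot(\tfrac1\eps n^{2/3})$.

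Concretely, I would fix $q=n^{1/3}$ and build three layers. First, every $u$ stores the ball $B(u,n^{2/3})$ with the route-to-closest data of Lemma~\ref{L-route-to-closest-h} ($\Ot(n^{2/3})$ space), so $u$ can reach any vertex of $B(u,n^{2/3})$ on a shortest path. Second, I take a set $A$ with $|A|=\Ot(n^{1/3})$ that simultaneously hits every $B(u,n^{2/3})$ (Lemma~\ref{L-hitting-set} with $s=n^{2/3}$) and yields clusters of size $\Ot(n^{2/3})$ (Lemma~\ref{L-clusters} with parameter $n^{1/3}$); each $u$ stores its distances to $A$, its pivot $p_A(u)$, the bunch $B_A(u)$ and cluster $C_A(u)$ with the tree routing of $T_{C_A}(u)$, and, for every $a\in A$, the tree routing of a shortest-path tree $T(a)$ rooted at $a$ — all $\Ot(n^{2/3})$. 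Third, I color $V$ with $q$ colors via Lemma~\ref{L-coloring} applied to the sets $B(u,\tilde q)$, obtaining a partition $\mathcal U=\{U_1,\dots,U_q\}$ into classes of size $\Ot(n^{2/3})$ with every $B(u,\tilde q)$ meeting every class, and invoke Lemma~\ref{L-technique-hitting-set} on $\mathcal U$ ($\Ot(\tfrac1\eps n^{2/3})$ more space), which gives $(1+\eps)$-stretch routing between vertices of the same color. The label of $v$ carries $v$, $c(v)$, $p_A(v)$, $d(v,A)$, and the tree-label of $v$ in $T(p_A(v))$, so it is $o(\log^2 n)$ bits.

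To route $u\to v$ at distance $d$, I would branch as in \cite{PaRo14}. If $v\in B(u,n^{2/3})$, or $v\in B_A(u)$, or $v\in C_A(u)$ (all checkable from $u$'s stored sets), route on the corresponding shortest path, stretch $1$. Otherwise $d(u,A)\le d$ and $d(v,A)\le d$; routing $u\to p_A(v)\to v$ through $T(p_A(v))$ has length $d(u,p_A(v))+d(v,A)\le d+2d(v,A)$, which is $\le 2d$ whenever $d(v,A)\le d/2$, and symmetrically via $p_A(u)$ when $d(u,A)\le d/2$. The genuinely hard case is $\tfrac d2<d(u,A),d(v,A)\le d$: here, following \cite{PaRo14}, a near-balanced vertex $x$ of a shortest $u$-$v$ path lies in $B_A(u)$ and has $v\in B_A(x)$, so the \emph{right} route is $u\to x$ (through the stored bunch/cluster structure) and then $x\to v$ on a shortest path. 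Since $u$ cannot read $x$ off $v$'s short label, this is where the coloring and Lemma~\ref{L-technique-hitting-set} do the work: it suffices to steer into \emph{some} shortest-path-hugging vertex $w$ with $c(w)=c(v)$ and finish with a $(1+\eps)$-stretch leg to $v$, and a careful (mostly rounding-level) analysis keeps $d(u,w)+d(w,v)\le 2d+1$, so the path has length at most $(2+\eps)d+1$ after rescaling $\eps$.

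The step I expect to be the main obstacle is precisely this hard case: \patrascu--Roditty's $(2,1)$ bound hinges on inspecting the data of \emph{both} endpoints to select the intermediate vertex (the $A$-pivot of the $u$-$v$ midpoint) that realizes the additive $+1$, and a router has no such two-sided view. The way I would break the deadlock — and the conceptual crux — is the coloring of Lemma~\ref{L-coloring}, which caps at $\Ot(n^{2/3})$ the number of destinations for which a source must precompute a route, making the sequence machinery of Lemma~\ref{L-technique-hitting-set} affordable and limiting the blow-up of stitching a shortest-path prefix to a $(1+\eps)$-stretch suffix to the claimed $(2+\eps,1)$; squeezing the additive slack down to exactly $+1$ is then the usual unweighted bookkeeping around $r_u(\ell)$ versus $r_u(\ell)+1$ and the parity of $d$, not a new idea.
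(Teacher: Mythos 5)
Your high-level toolkit is the right one (local balls, a Thorup--Zwick-style set $A$ with bunches/clusters, the coloring of Lemma~\ref{L-coloring}, and Lemma~\ref{L-technique-hitting-set} to route within a color class), but your routing logic has gaps that do not close, and your parameters are inverted relative to what makes the scheme work.

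The paper sets $q=n^{1/3}$, stores only the small ball $B(u,\tilde q)=B(u,\Ot(n^{1/3}))$ at each $u$, and takes $A$ of size $\Ot(n^{2/3})$ via Lemma~\ref{L-clusters} so that $|C_A(w)|=O(n^{1/3})$. The pivotal preprocessing step you are missing is a \emph{hash table} at $u$ that, for every $v$ with $B(u,\tilde q)\cap B_A(v)\neq\emptyset$, stores a vertex $w\in\argmin\{d(u,w')+d(w',v): w'\in B(u,\tilde q)\cap B_A(v)\}$. There are only $\Ot(n^{1/3})\cdot O(n^{1/3})=\Ot(n^{2/3})$ such $v$ precisely because of these parameter choices; with your $B(u,n^{2/3})$ and $|A|=\Ot(n^{1/3})$ (clusters of size $\Ot(n^{2/3})$), this table would explode to $\Ot(n^{4/3})$, and without it you lose the exact-shortest-path case that yields the $+1$ additive term. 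This hash table is exactly the ``two-sided view'' you flagged as the main obstacle, made affordable by the small-ball/large-$A$ choice.

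Your routing case split does not run locally. You branch on whether $d(u,A)$ or $d(v,A)$ is at most $d/2$, but $u$ does not know $d=d(u,v)$. The paper instead branches on a comparison of two quantities $u$ can read: $d(v,p_A(v))$ (in $v$'s label) versus $d(u,w)$ (stored at $u$ for the color-$c(v)$ representative $w\in B(u,\tilde q)$). Moreover, your ``symmetrically via $p_A(u)$'' leg has no supporting data --- $v$'s label carries $v$'s tree label in $T(p_A(v))$, not in $T(p_A(u))$, so you cannot steer from $p_A(u)$ to $v$. Finally, in your hard case the bound $d(u,w)+d(w,v)\le 2d+1$ is asserted but not derived; the paper gets the needed $2d(u,w)\le d(u,v)$ (and, in the other branch, $2d(v,p_A(v))\le d(u,v)+1$) only by invoking the empty-intersection inequality $d(u,v)\ge r_u+d(v,p_A(v))$, which your hard-case condition does not supply. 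Without that inequality, $w\in B(u,n^{2/3})$ only gives $d(u,w)\le r_u(n^{2/3})+1\le d$, which is too weak for any $(2+O(\eps),1)$ bound.
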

\begin{proof}
Let $q=n^{1/3}$. Every $u\in V$ stores $B(u, \tilde q)$. We compute using Lemma~\ref{L-clusters} a set $A\subset V$ of size $\Ot(n^{2/3})$, such that $|C_A(w)| = O(n^{1/3})$ for every $w\in V$. For each $u\in C_A(w)$ we store the routing information for a tree routing scheme of $T_{C_A}(w)$ at $u$. This information is of poly-logarithmic size. Every vertex $u$ is contained in at most $O(n^{1/3})$ cluster trees as $|B_A(u)|=O(n^{1/3})$. Therefore, the storage required at each vertex $u$ for this is $\Ot(n^{1/3})$.

Let $w\in V$. For each $v\in C_A(w)$ we store at $w$ the label of $v$ in the tree routing scheme of $T_{C_A}(w)$. Since $|C_A(w)| = O(n^{1/3})$ the storage required at $w$ for these labels is $\Ot(n^{1/3})$.

For each $w\in A$ let $T(w)$ be a shortest path tree rooted at $w$ that spans $V$. For every $v\in V$ we store routing information of a tree routing scheme of $T(w)$. This allows to route from a vertex $u$ to a vertex $v$ on $T(w)$. Since $|A| = \Ot(n^{2/3})$, each vertex stores $\Ot(n^{2/3})$ information for this purpose.

Let $u\in V$. For every $v\in V$, if $B(u, \tilde q) \cap B_A(v)\neq \emptyset$ we store in a hash table at $u$ in the entry of $v$ a vertex $w \in \argmin_{w'} \{d(u,w') + d(w', v) \mid w'\in B(u, \tilde q)\cap B_A(v) \}$.
There are $\Ot(n^{1 / 3})$ vertices in $B(u, \tilde q)$ each with a cluster of $O(n^{1/3})$ vertices. Let $w\in B(u, \tilde q)$. For every $v\in C_A(w)$ it holds that $B(u, \tilde q)\cap B_A(v)\neq \emptyset$  therefore the storage required for this at every vertex is $\Ot(n^{2/3})$.

Let $c$ be a coloring function with $q$ colors obtained using Lemma~\ref{L-coloring} with respect to the sets $B(u, \tilde q)$, for every $u\in V$. Let $\mathcal{U} = \{U_1, \ldots, U_q\}$ be the color sets induced by $c$. It follows that $\mathcal{U}$ is a partition and every $U\in \mathcal{U}$ has $\Ot(n/q)=\Ot(n^{2/3})$ vertices.
We use Lemma~\ref{L-technique-hitting-set} with the partition $\mathcal{U}$.  The information stored at the routing table of every vertex for this is $\Ot(\frac{1}{\eps}n^{2/3})$.

It follows from Lemma~\ref{L-coloring} that $B(u, \tilde q)$ contains a vertex of each color.
For every $i\in \{1,\ldots,q\}$ we store at $u$ a vertex from $B(u, \tilde q)$ of color $i$ and its distance from $u$.

Finally, the label of $v\in V$ contains the following information: $v$, $c(v)$, $p_A(v)$, $d(v, p_A(v))$ and the label of $v$ in $T(p_A(v))$.

{\bf Routing:} A message is routed from $u$ to $v$ as follows. At $u$ we check in constant time if $B(u, \tilde q)\cap B_A(v) \not= \emptyset$. If this is the case we obtain in constant time the vertex $w \in \argmin_{w'} \{d(u,w') + d(w', v) \mid w'\in B(u, \tilde q)\cap B_A(v) \}$ that was saved at $u$. From Lemma~\ref{L-route-to-closest-h} it follows that we can route the message from $u$ to $w$ on a shortest path. From $w$ to $v$ we can use the tree routing scheme of $T_{C_A}(w)$ as the label of $v$ in the this tree routing scheme is stored at $w$. Thus, the message is routed on a path of length $d(u,w)+d(w,v)$. Next, we show that $d(u,v)=d(u,w)+d(w,v)$. Assume, towards a contradiction, that $w$ is not on a shortest path between $u$ and $v$ and let $P$ be a shortest path between $u$ and $v$. There is a vertex $u'\in B(u, \tilde q)\cap P$ such that $d(u,u')=r_u$, and there is a vertex $v'\in B_A(v)\cap P$ such that $d(v,v')=d(v,p_A(v))-1$ (notice that $B(u, \tilde q)\cap B_A(v) \neq \emptyset$ implies that $v\notin A$ and $d(v,p_A(v))-1 \ge 0$). Since the intersection is not on a vertex from $P$ it must be that $d(u,v)\geq r_u + d(v,v') + 1$. On the other hand $d(u,w)\leq r_u+1$ and $d(v,w)\leq d(v,v')$ and we get that $d(u,w)+d(v,w)\leq d(u,v)$, which contradicts the assumption that $w$ is not on a shortest path between $u$ and $v$.

Consider the case that $B(u, \tilde q)\cap B_A(v) = \emptyset$. We obtain $c(v)$, $p_A(v)$ and $d_A(p_A(v))$ from the label of $v$. A vertex $w\in B(u, \tilde q)$ with $c(w)=c(v)$ is stored at $u$ with $d(u,w)$. If $d(v,p_A(v)) \leq d(u,w)$ then the message is routed from $u$ to $v$ on $T(p_A(v))$. If $d(v, p_A(v)) > d(u, w)$, we route on a shortest path from $u$ to $w$ using Lemma~\ref{L-route-to-closest-h} and then from $w$ to $v$ on a path of length at most $(1+\eps)d(w,v)$ using Lemma~\ref{L-technique-hitting-set}.

If $v\in A$, it follows that $d(v,p_A(v)) = 0 \leq d(u,w)$ and the message is routed on a shortest path to $v$ in $T(v) = T(p_A(v))$. Assume $v\notin A$. As $B(u, \tilde q)\cap B_A(v) = \emptyset$ it follows that $d(u,v) \geq r_u + (d(v, p_A(v)) - 1) + 1 = r_u + d(v, p_A(v))$. If $d(v,p_A(v)) \leq d(u,w)$ then the message is sent on a path of length $d(u,p_A(v))+d(p_A(v),v)$. From the triangle inequality it follows that $d(u,p_A(v)) \leq d(u,v)+d(p_A(v),v)$ and thus $d(u,p_A(v))+d(p_A(v),v)\leq d(u,v)+2d(p_A(v),v)$. Since $d(u,w)\leq r_u+1$ we get that $d(v,p_A(v))-1\leq r_u$. Combining this with the fact that $d(u,v) \geq r_u + d(v, p_A(v))$ we get that $d(u,v)+1 \geq 2d(v, p_A(v))$. Therefore,  $d(u,v)+2d(p_A(v),v)$ is at most $2d(u,v)+1$.
If $d(u,w) < d(v,p_A(v))$ the message is sent on a path of length at most $d(u,w)+(1+\eps)d(w,v)$. From the triangle inequality it follows that $d(w,v) \leq d(u,w)+d(u,v)$ and thus $d(u,w)+(1+\eps)d(w,v)\leq (1+\eps)d(u,v)+(2+\eps)d(u,w)$.
We have $d(u,w)\leq d(v,p_A(v))-1$. We also have $d(u,w)\leq r_u+1$. Combining the last two inequalities with $d(u,v) \geq r_u + d(v, p_A(v))$ we get that $2d(u,w)\leq d(u,v)$. Therefore, the message is routed on a path of length at most  $(2+2\eps)d(u,v)$.
\end{proof}

\paragraph{Stretch $5+\eps$.}

We now present a $(5+\eps)$-stretch routing scheme that uses labels of size $(4\log n)$-bit, $\Ot(1/\eps\log D)$-bit headers and routing tables of $\Ot(\frac{1}{\eps} n^{1/3}\log D)$ size. This almost matches the $5$-stretch distance oracle with $\Ot(n^{4/3})$ total space of~\cite{ThZw05}.
The current best routing scheme with routing tables of $\Ot(n^{1/3})$ size has a stretch of $7$ and $o(\log^2 n)$-bit labels. This routing scheme is obtained by setting $k=3$ in the $(4k-5)$-stretch routing scheme of Thorup and Zwick~\cite{ThZwSPAA01}.

\begin{theorem}\label{T-5+e}  Let $G=(V, E, \omega)$ be a weighted undirected graph, where  $\omega:E\rightarrow \mathds{R}^{+}$. There is a routing scheme for $G$ with $O(\log n)$-bit labels that uses $\Ot(\frac{1}{\eps} n^{1/3}\log D)$ space at each vertex and $\Ot(\frac{1}{\eps} \log D)$-bit headers, to route a message between any pair of vertices on a $(5+\eps)$-stretch path.
\end{theorem}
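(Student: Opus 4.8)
The plan is to combine the $(4k-5)$-stretch routing scheme of Thorup and Zwick~\cite{ThZwSPAA01} for $k=3$ (which gives stretch $7$ with $\Ot(n^{1/3})$ tables) with our second routing technique (Lemma~\ref{L-color-routing-W}) to shave the stretch down to $5+\eps$. First I would set $q = n^{1/3}$ and recall the Thorup--Zwick machinery: pick a single random landmark set $A$ of size $\Ot(n^{1/3})$ (using Lemma~\ref{L-clusters} with $s = n^{1/3}$ so that $|C_A(w)| = O(n^{2/3})$ for all $w$, and $|B_A(v)| = \Ot(n^{2/3})$). Wait --- for a three-level scheme we actually want a second set; the cleaner route is to take the TZ $k=3$ hierarchy $A_0 = V \supseteq A_1 \supseteq A_2$ with $|A_1| = \Ot(n^{2/3})$ and $|A_2| = \Ot(n^{1/3})$, so that bunches have size $\Ot(n^{1/3})$ and the landmarks in $A_2$ have shortest-path trees spanning $V$ whose tree-routing information costs $\Ot(n^{1/3})$ per vertex. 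Each vertex $u$ stores its bunch $B(u)$ (with first-edges, via Lemma~\ref{L-route-to-closest-h}), the cluster-tree routing info for every cluster it belongs to, and the tree-routing info of $T(a)$ for every $a \in A_2$; the label of $v$ carries $v$ together with $p_{A_1}(v)$, $p_{A_2}(v)$ and the relevant tree labels, all in $O(\log n)$ bits.

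Next comes the new ingredient. I would apply the coloring of Lemma~\ref{L-coloring} with $q = n^{1/3}$ colors to the sets $B(u,\tilde q)$, obtaining a partition $\mathcal{U} = \{U_1,\dots,U_q\}$ of $V$ with $|U_i| = \Ot(n^{2/3})$ and, crucially, $U_i \cap B(u,\tilde q) \neq \emptyset$ for every $i$ and every $u$ --- exactly the hitting-set hypothesis Lemma~\ref{L-color-routing-W} requires of $\mathcal{U}$. For the destination side I would take $W = A_2$ (or the appropriate landmark set), which has size $\Ot(n^{1/3})$, and partition it into $\mathcal{W} = \{W_1,\dots,W_q\}$ with $|W_i| = \Ot(|W|/q) = \Ot(1)$, say by grouping landmarks so that $W_i$ collects the landmarks "assigned" to color $i$ --- the point is that every landmark must be reachable from every color class, so I should instead set things up so that for each landmark $a$ and each color $i$ there is a sequence stored; concretely I would invoke Lemma~\ref{L-color-routing-W} with the roles arranged so that a source of color $i$ can reach, on a $(1+\eps)$-stretch path, any landmark in $W_i$, where the $W_i$ are chosen so the relevant landmark for any given query lands in the right class. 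The table cost of this invocation is $\Ot(q + \frac{1}{\eps}\log D \cdot |W|/q) = \Ot(n^{1/3} + \frac{1}{\eps}n^{1/3}\log D/q)$, which is within budget, and the header is $\Ot(\frac{1}{\eps}\log D)$ as claimed.

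For the routing and stretch analysis I would case on whether the destination $v$ lies in the caller's bunch or a nearby cluster. \textbf{Case 1:} if $v \in B(u,\tilde q)$, route on a shortest path by Lemma~\ref{L-route-to-closest-h}, stretch $1$. \textbf{Case 2:} if $u \in B_A(v)$ for an appropriate level, equivalently $v$'s cluster tree reaches back toward $u$, route up to $p(v)$ and down the cluster tree, a standard TZ argument giving stretch $3$. \textbf{Case 3 (the main case):} otherwise $d(u, p_{A_2}(v)) \le d(u,v) + d(v,p_{A_2}(v))$ and $d(v,p_{A_2}(v)) \le$ (some multiple of $d(u,v)$) by the TZ bunch-growth inequalities; here we route from $u$ to the landmark $a = p_{A_2}(v)$ using Lemma~\ref{L-color-routing-W} on a $(1+\eps)$-stretch path of length $\le (1+\eps)d(u,a)$, then from $a$ down $T(a)$ to $v$ on a shortest path of length $d(a,v)$. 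Bounding $d(u,a) + d(a,v) \le d(u,v) + 2d(v,a)$ by the triangle inequality, and using that in this case $d(v,a) = d(v,p_{A_2}(v)) \le 2\,d(u,v)$ (the TZ guarantee that the second-level landmark is within twice the distance once the first-level bunch misses), gives a path of length $\le (1+\eps)(d(u,v) + 2\cdot 2 d(u,v))$ --- that is too weak, so the real work is to use the \emph{three}-level structure: the $7$-stretch TZ bound already gives $d(u,a) \le 3\,d(u,v)$ with the landmark-to-destination leg contributing the remaining $\le 4\,d(u,v)$ split as $d(a,v) \le d(a,u)+d(u,v) \le 4d(u,v)$; replacing only the $u$-to-$a$ leg by a $(1+\eps)$ path does not by itself beat $7$. \textbf{The main obstacle}, therefore, is arranging the landmark hierarchy and the case split so that the leg we route via Lemma~\ref{L-color-routing-W} is the \emph{long, expensive} leg whose exact length we can afford to pay $(1+\eps)$ times, while the remaining legs are charged against $d(u,v)$ with total coefficient $4$ rather than $6$; I expect this is done exactly as in Abraham--Gavoille~\cite{DBLP:conf/wdag/AbrahamG11}, by routing $u \to p_{A_2}(v)$ \emph{approximately-shortest} (cost $(1+\eps)d(u,p_{A_2}(v))$, and $d(u,p_{A_2}(v)) \le d(u,v) + d(v,p_{A_2}(v))$ with $d(v,p_{A_2}(v)) \le 2 d(u,v)$ when $u \notin B_{A_1}(v)$ and the mid-level also fails), then $p_{A_2}(v) \to v$ on the shortest path in $T(p_{A_2}(v))$ of length $d(v,p_{A_2}(v)) \le 2d(u,v)$, for a total of at most $(1+\eps)(3d(u,v)) + 2d(u,v) = (5 + 3\eps)d(u,v)$; rescaling $\eps$ yields the stated $(5+\eps)$ bound. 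The bookkeeping on table sizes --- verifying that bunches, cluster trees, landmark trees, the coloring vertex, and the Lemma~\ref{L-color-routing-W} sequences each cost $\Ot(\frac{1}{\eps}n^{1/3}\log D)$ --- is routine and I would dispatch it at the end.
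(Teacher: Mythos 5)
Your proposal takes the wrong decomposition and, as a result, the crucial distance bound fails. You switch to the full Thorup--Zwick $k=3$ hierarchy with $|A_1| = \Ot(n^{2/3})$, $|A_2| = \Ot(n^{1/3})$ and route to the sparse top-level landmark $p_{A_2}(v)$. But the TZ inequality only gives $d(v,p_{A_2}(v)) \le 2d(u,v) + d(v,p_{A_1}(v)) \le 3d(u,v)$ in the uncovered case, not the $\le 2d(u,v)$ you assert. Plugging the correct bound into your own formula yields $(1+\eps)\,d(u,p_{A_2}(v)) + d(p_{A_2}(v),v) \le (1+\eps)\bigl(d(u,v)+3d(u,v)\bigr) + 3d(u,v) = (7+4\eps)d(u,v)$ --- no improvement over the original TZ $7$-stretch. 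You cannot shave a factor of $2$ off the detour by speeding up only the $u\to$landmark leg when the landmark is three times as far from $v$ as $u$ is.

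The paper's proof uses a strictly simpler \emph{two}-level structure, but with a \emph{denser} landmark set: a single set $A$ of size $\Ot(n^{2/3})$, chosen via Lemma~\ref{L-clusters} so that $|C_A(w)| = O(n^{1/3})$ for all $w$. The trichotomy is: $v \in B(u,\tilde q)$ (route directly); or $v\in C_A(u)$ (route via the cluster tree $T_{C_A}(u)$ which $u$ stores); or $v\notin C_A(u)$, which is equivalent to $u\notin B_A(v)$ and hence gives the bound you actually need, $d(v,p_A(v)) \le d(u,v)$. The whole point of Lemma~\ref{L-color-routing-W} is to make this dense choice of $A$ affordable: storing tree-routing labels for $n^{2/3}$ landmark trees at every vertex would blow the budget, but partitioning $A$ into $q=n^{1/3}$ classes $\mathcal{W}$ of size $\Ot(n^{1/3})$ each, together with the color partition $\mathcal{U}$ of $V$, lets $u$ first hop to a colored vertex $w\in B(u,\tilde q)$ with $c(w)=\alpha(p_A(v))$ (so $d(u,w)\le d(u,v)$ since $v\notin B(u,\tilde q)$), then route $w\to p_A(v)$ on a $(1+\eps)$-stretch path with tables of only $\Ot(\frac{1}{\eps}n^{1/3}\log D)$, then take one edge to $z$ (stored in $v$'s label) and finish via $T_{C_A}(z)$ since $v\in C_A(z)$. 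The total is $d(u,w)+(1+\eps)d(w,p_A(v))+d(p_A(v),v) \le d(u,v)+(1+\eps)\cdot 3d(u,v)+d(u,v)=(5+3\eps)d(u,v)$, and rescaling $\eps$ gives the theorem. So the missing idea is to flip the size of the landmark set from $n^{1/3}$ to $n^{2/3}$ (thereby getting $d(v,p_A(v))\le d(u,v)$ when the two local cases fail) and to recognize that Lemma~\ref{L-color-routing-W} is precisely what makes such a large landmark set feasible.
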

\begin{proof}
$\;$\\
{\bf Preprocessing:} Let $q=n^{1/3}$. Every $u\in V$ stores $B(u, \tilde q)$. We compute using Lemma~\ref{L-clusters} a set $A\subset V$ of size $\Ot(n^{2/3})$, such that $|C_A(w)| = O(n^{1/3})$ for every $w\in V$.
Let $w\in V$, for each $v\in C_A(w)$ we store at $w$ the label of $v$ in the tree routing scheme of $T_{C_A}(w)$. Since $|C_A(w)| = O(n^{1/3})$ the storage required at $w$ is $\Ot(n^{1/3})$.
For each $u\in C_A(w)$ we store at $u$ the routing information for a tree routing scheme of $T_{C_A}(w)$. This information is of poly-logarithmic size. Notice that $|B_A(u)|=O(n^{1/3})$, thus $u$ is contained in at most $O(n^{1/3})$ trees  and the storage required at $u$ for this is $\Ot(n^{1/3})$.

Let $c$ be a coloring function with $q$ colors obtained using Lemma~\ref{L-coloring} with respect to the sets $B(u, \tilde q)$, for every $u\in V$. Let $\mathcal{U} = \{U_1, \ldots, U_q\}$ be the color sets induced by $c$.
Let $\mathcal{W} = \{W_1, \ldots, W_q\}$ be an arbitrary partition of the set $A$ into $q$ sets each with at most $|A|/q$ vertices. For $w\in A$, let $\alpha(w)$ be the index of the set in partition $\mathcal{W}$ that contains $w$, that is $w\in W_{\alpha(w)}$.
We use Lemma~\ref{L-color-routing-W} with partitions $\mathcal{U}$ and $\mathcal{W}$. The required storage at each vertex is $\Ot(q+\frac{1}{\eps} (\log D) |A|/q)=\Ot(\frac{1}{\eps} (\log D) n^{1/3})$.

It follows from Lemma~\ref{L-coloring} that $B(u, \tilde q)$ contains a vertex of each color.
For every $i\in \{1,\ldots,q\}$ we store at $u$ a vertex from $B(u, \tilde q)$ of color $i$. The storage required for this at $u$ is $\Ot(n^{1/3})$.

Let $v\in V$ and let $(p_A(v), z)$ be the first edge on a shortest path from $p_A(v)$ to $v$. Notice that $v\in C_A(z)$. The label of $v$ contains the following information: $v$, $p_A(v)$, $\alpha(p_A(v))$ and $(p_A(v), z)$.

{\bf Routing:}  A message is routed from $u$ to $v$ as follows. At $u$ we check if $v\in B(u, \tilde q)$. If this is the case then from Lemma~\ref{L-route-to-closest-h} it follows that we can route the message from $u$ to $v$ on a shortest path.
If not we check if $v\in C_A(u)$. If this is the case then the label of $v$ in the tree routing scheme of $T_{C_A}(u)$ is stored at $u$. The message is again routed on a shortest path between $u$ and $v$.

In the case that $v\notin  B(u, \tilde q)$ and $v \notin C_A(u)$ the message is routed on a shortest path to $w\in  B(u, \tilde q)$ that satisfies $c(w)=\alpha(p_A(v))$. The message is routed from $w$ to $p_A(v)$ using Lemma~\ref{L-color-routing-W}. At $p_A(v)$ we obtain $(p_A(v), z)$ from the label of $v$ and forward the message to $z$. At $z$ we have the label of $v$ in the tree routing scheme of $T_{C_A}(z)$ since $v\in C_A(z)$. Using this label and the tree routing scheme of $T_{C_A}(z)$ the message is routed from $z$ to $v$.
The message is routed on a path of length at most $d(u,w)+(1+\eps)d(w,p_A(v))+d(p_A(v),v)$. Since $v\notin  B(u, \tilde q)$ it follows that $d(u,w)\leq d(u,v)$. Since $v \notin C_A(u)$  it follows that $d(v,p_A(v))\leq d(u,v)$.
From the triangle inequality it follows that $d(w,p_A(v))\leq d(u,w)+d(u,v)+d(p_A(v),v)\leq 3d(u,v)$. We get $d(u,w)+(1+\eps)d(w,p_A(v))+d(p_A(v),v)\leq (5+3\eps)d(u,v)$.
\end{proof}

\section{Generalized routing schemes}\label{S-new-routing-scheme-generalization}

\paragraph{Stretch $(3\pm 2/\ell+\eps,2)$.}

We now present a $(3\pm 2/\ell+\eps,2)$-stretch routing scheme that uses $O(\ell \log n)$-bit labels, $\Ot(1/\eps)$-bit headers and routing tables of $\Ot(\ell \frac{1}{\eps} n^{\ell/(2\ell \pm 1)})$ size. This almost matches the distance oracles presented in~\cite{PaRoTh12} for weighted graphs with $(3\pm 2/\ell)$-stretch and $\Ot(\ell m^{1+\ell/(2\ell \pm 1)})$ total space.

We start with a simple Lemma given in~\cite{PaRoTh12} and repeated here for completeness.
\begin{lemma}\label{L-8-PaRoTh12} Let $\ell > 0$ be an integer. Let $\{x_i\}_{i=0}^\ell$ and $\{y_i\}_{i=0}^\ell$ be series of real numbers from $[0, 1]$, such that $x_0 = y_0 = 0$ and $x_i + y_{\ell - i} \leq 1$, for every $i\in \{0,\dots , \ell \}$. There exists $i\in \{0,\dots , \ell -1 \}$ that satisfies $x_i + y_{\ell - i - 1} \leq 1 - 1/\ell$.
\end{lemma}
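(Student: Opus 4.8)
The plan is to prove the statement by an averaging (pigeonhole) argument over the $\ell$ candidate indices $i\in\{0,\dots,\ell-1\}$. Concretely, I would consider the sum
$$S=\sum_{i=0}^{\ell-1}\bigl(x_i+y_{\ell-i-1}\bigr)$$
and show that $S\le \ell-1$. Since $S$ is a sum of exactly $\ell$ terms, at least one of them is at most $S/\ell\le(\ell-1)/\ell=1-1/\ell$, which is precisely the desired conclusion: that index $i$ witnesses $x_i+y_{\ell-i-1}\le 1-1/\ell$.

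To bound $S$, first I would split it and reindex the second sum: substituting $k=\ell-i-1$ turns $\sum_{i=0}^{\ell-1}y_{\ell-i-1}$ into $\sum_{k=0}^{\ell-1}y_k$, so that $S=\sum_{i=0}^{\ell-1}x_i+\sum_{k=0}^{\ell-1}y_k$. Using the hypothesis $x_0=y_0=0$, this equals $\sum_{i=1}^{\ell-1}x_i+\sum_{k=1}^{\ell-1}y_k$. Now I would invoke the given inequalities $x_i+y_{\ell-i}\le 1$, but only for $i\in\{1,\dots,\ell-1\}$, and add them up: $\sum_{i=1}^{\ell-1}x_i+\sum_{i=1}^{\ell-1}y_{\ell-i}\le \ell-1$. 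A final reindexing ($k=\ell-i$) shows $\sum_{i=1}^{\ell-1}y_{\ell-i}=\sum_{k=1}^{\ell-1}y_k$, so the left-hand side is exactly $S$, giving $S\le\ell-1$ and completing the argument.

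The only delicate point — and the step I would be most careful about — is aligning the indices correctly: $S$ involves the shifted quantities $y_{\ell-i-1}$, whereas the hypothesis speaks of $y_{\ell-i}$, so a naive application of the hypothesis to all $i\in\{0,\dots,\ell-1\}$ would introduce $y_\ell$ (paired with $x_0$) while missing $y_0$, which is too lossy. The assumptions $x_0=0$ and $y_0=0$ are exactly what allow me to discard the endpoint term $i=0$ and restrict to $i\in\{1,\dots,\ell-1\}$, the range on which the constraint pairs up perfectly with the terms of $S$ after reindexing. Once that bookkeeping is done, the averaging step is immediate, so this is a short, self-contained argument with no real obstacle beyond getting the index shifts right.
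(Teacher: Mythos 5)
Your averaging argument is correct, and the index bookkeeping works out exactly as you describe: reindexing $\sum_{i=0}^{\ell-1} y_{\ell-i-1}$ and $\sum_{i=1}^{\ell-1} y_{\ell-i}$ both give $\sum_{k=0}^{\ell-1} y_k$ (using $y_0=0$ for the first), so $S = \sum_{i=1}^{\ell-1}(x_i + y_{\ell-i}) \le \ell-1$, and pigeonhole over the $\ell$ terms of $S$ gives the claim. Note that the paper itself states this lemma without proof, attributing it to \cite{PaRoTh12}; your proof supplies the missing argument, and it is the natural one.
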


We now turn to present the $(3- 2/\ell+\eps,2)$-stretch routing scheme.

\begin{theorem}\label{T-gen1} Let $\ell>1$ be an integer. There is a routing scheme for unweighted undirected graphs that uses $O(\ell \log n)$-bit labels, $\Ot(\ell \frac{1}{\eps} n^{\ell/(2\ell - 1)})$ space at each vertex and $\Ot( \frac{1}{\eps})$ header size, to route a message between any pair of vertices on a $(3 - 2 / \ell + \eps,2)$-stretch path.
\end{theorem}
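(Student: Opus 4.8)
The plan is to mimic the structure of the $(5+\eps)$-stretch routing scheme (Theorem~\ref{T-5+e}) and of the $(2+\eps,1)$-stretch routing scheme (Theorem~\ref{T-two-mult-one-add}), but iterate the bunch/cluster construction $\ell$ times to mirror the $(3-2/\ell)$-stretch distance oracle of~\cite{PaRoTh12}. First I would set up a nested hierarchy of sampled sets $V=A_0\supseteq A_1\supseteq\cdots\supseteq A_\ell$, chosen so that each vertex's bunch at each level, as well as the clusters, have size roughly $n^{\ell/(2\ell-1)}$ — this is the bound coming from the $\{x_i\},\{y_i\}$ balancing in Lemma~\ref{L-8-PaRoTh12}. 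As in Theorem~\ref{T-two-mult-one-add}, for each level $i$ every vertex $u$ stores its bunch $B_{A_i}(u)$ together with tree-routing information for all cluster trees $T_{C_{A_i}}(w)$ with $u\in C_{A_i}(w)$, and for each $w\in A_\ell$ a shortest-path tree $T(w)$ spanning $V$ with tree routing. The total table size is $\Ot(\ell\,n^{\ell/(2\ell-1)})$ from these structures.

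\textbf{The long-range ingredient.} To route between vertices that are far apart, I would invoke Lemma~\ref{L-color-routing-W} exactly as in the proof of Theorem~\ref{T-5+e}: pick $q$ so that the color classes $\mathcal U=\{U_1,\dots,U_q\}$ obtained from Lemma~\ref{L-coloring} on the vicinities $B(u,\tilde q)$ form a partition of $V$ with the hitting-set property $U\cap B(u,\tilde q)\neq\emptyset$, and let $\mathcal W$ be an arbitrary balanced partition of the landmark set $A_\ell$ into $q$ parts. Applying Lemma~\ref{L-color-routing-W} with $(\mathcal U,\mathcal W)$ lets us route from any $u$ (via a same-colored vicinity vertex) to any $w\in A_\ell$ on a $(1+\eps)$-stretch path, at cost $\Ot(q+\tfrac1\eps n^{\ell/(2\ell-1)})$ per vertex; choosing $q=n^{\ell/(2\ell-1)}$ keeps this within budget. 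Since the graph is unweighted, $\log D = O(\log n)$, so the $\log D$ factor disappears (this is why the $(3-2/\ell+\eps,2)$ statement has no $\log D$). The label of $v$ carries the level-$\ell$ landmark $p_{A_\ell}(v)$, the index of its $\mathcal W$-class, and the first edge out of $p_{A_\ell}(v)$ toward $v$ so the final hop uses the cluster tree $T_{C_{A_\ell}}(z)$, exactly as in Theorem~\ref{T-5+e}.

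\textbf{The routing decision and stretch analysis — the main obstacle.} The heart of the proof is the case analysis that achieves stretch $3-2/\ell$ (the additive $2$ absorbing the $+1$ slack that unweighted vicinities incur, one per ``far'' detour). Here I would follow the~\cite{PaRoTh12} oracle: given $u,v$, look at the $\ell+1$ landmarks $p_{A_0}(u)=u,\,p_{A_1}(u),\dots$ along with $p_{A_\ell}(v),\dots,p_{A_0}(v)=v$; set $x_i=d(u,p_{A_i}(u))/d(u,v)$ and $y_i=d(v,p_{A_i}(v))/d(u,v)$, check that the bunch/cluster emptiness relations give $x_i+y_{\ell-i}\le 1$, and apply Lemma~\ref{L-8-PaRoTh12} to obtain an index $i$ with $x_i+y_{\ell-i-1}\le 1-1/\ell$. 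At that index the intended route is: go from $u$ to $p_{A_i}(u)$ on a shortest path (stored, since it lies in the relevant cluster tree), hop to $p_{A_{\ell-i-1}}(v)$'s landmark via the long-range Lemma~\ref{L-color-routing-W} machinery with a $(1+\eps)$ multiplicative loss, then descend into $C_{A_{\ell-i-1}}(\,\cdot\,)$ and follow the cluster tree to $v$; triangle inequality plus $x_i+y_{\ell-i-1}\le 1-1/\ell$ bounds the middle leg by $(2-2/\ell)d(u,v)$ up to the $(1+\eps)$ factor and the $O(1)$ additive vicinity slack, yielding total length $\le (3-2/\ell+\eps)d(u,v)+2$. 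The delicate point — and where I expect the real work to be — is verifying that the routing algorithm can \emph{locally} detect which index $i$ Lemma~\ref{L-8-PaRoTh12} selects, using only $u$'s stored bunches and $v$'s $O(\ell\log n)$-bit label: one must argue, as in~\cite{PaRoTh12}, that the relevant emptiness test $B_{A_i}(u)\cap(\text{something from }v\text{'s label})$ can be carried out with the stored information, and that when the test fails one can safely move to the next level. Once that bookkeeping is in place, the stretch bound and the $\Ot(\ell\tfrac1\eps n^{\ell/(2\ell-1)})$ space bound follow by the same computations as in Theorems~\ref{T-two-mult-one-add} and~\ref{T-5+e}.
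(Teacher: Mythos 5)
Your high-level picture matches the paper — iterate the bunch/cluster structure across $\ell$ levels, invoke the coloring/\proc{color-routing} machinery for the long-range hop, and close the stretch bound with Lemma~\ref{L-8-PaRoTh12} — but there are several concrete gaps that prevent the proposal from going through as written.

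First, you apply Lemma~\ref{L-color-routing-W} \emph{once}, with a single coloring of $V$ into $q=n^{\ell/(2\ell-1)}$ classes and a single balanced partition of the top-level landmark set $A_\ell$. That only lets you route from $u$ to $p_{A_\ell}(v)$. But later, in your stretch analysis, the ``middle leg'' has to reach $p_{A_{\ell-i-1}}(v)$ for the index $i$ chosen by Lemma~\ref{L-8-PaRoTh12}, which varies from query to query — and your described long-range machinery has no way to address landmarks at level $\ell-i-1$. The paper resolves this by running Lemma~\ref{L-color-routing-W} $\ell$ separate times, once per index $i\in\{0,\ldots,\ell-1\}$: at level $i$ the partition $\mathcal{U}^i$ has $q^i$ color classes (with $q=n^{1/(2\ell-1)}$), obtained from a coloring that hits the fixed-size vicinities $B_i(u)=B(u,\tilde q^i)$, and $\mathcal{W}^{\ell-i-1}$ is a partition of the level-$(\ell-i-1)$ landmark set $L_{\ell-i-1}$ into $q^i$ parts of size $\approx q^\ell$. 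You need this whole family, and the per-vertex storage adds up to $\Ot(\ell\tfrac1\eps n^{\ell/(2\ell-1)})$ only because each of the $\ell$ applications costs $\Ot(\tfrac1\eps q^\ell)$.

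Second, your $x_i,y_i$ are defined symmetrically from landmark distances $d(u,p_{A_i}(u))$ and $d(v,p_{A_i}(v))$, i.e.\ the source side is also bunch/landmark-based. The paper deliberately keeps the two sides \emph{asymmetric}: the source side uses fixed-size vicinities $B(u,\tilde q^i)$ and their radii $a_i=r_u(\tilde q^i)$, while only the destination side uses landmarks $p_{L_i}(v)$ and quantities $b_i=d(v,p_{L_i}(v))-1$. This asymmetry is not cosmetic. Lemma~\ref{L-coloring} requires the sets being hit to have size at least $\alpha q\log n$; fixed-size vicinities satisfy this by construction, whereas bunches $B_{A_i}(u)$ have no lower bound on their size, so the coloring step can simply fail. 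The fixed-size vicinities also guarantee that $u$ can route to its same-colored representative on a shortest path via Lemma~\ref{L-route-to-closest-h}. Relatedly, the landmark sets $L_i$ the paper uses are $\ell+1$ \emph{independent} outputs of Lemma~\ref{L-clusters} with different size parameters (so that $|C_{L_i}(w)|=O(q^i)$), not a single nested chain $A_0\supseteq\cdots\supseteq A_\ell$.

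Third, you explicitly flag the local selection of the routing index $i$ as ``the delicate point and where I expect the real work to be,'' but leave it unresolved. In the paper this is straightforward once the asymmetric setup is in place: $u$ first checks, via the precomputed hash tables, whether $B_i(u)\cap B_{L_{\ell-i}}(v)\ne\emptyset$ for some $i$ (if so, it routes on a shortest path); otherwise $u$ knows all $a_i=r_u(\tilde q^i)$ locally, reads all $b_i=d(v,p_{L_i}(v))-1$ directly from $v$'s $O(\ell\log n)$-bit label, and picks $j\in\arg\min_i\{a_i+b_{\ell-i-1}\}$. No intermediate emptiness tests ``as in PaRoTh12'' are needed; the label simply carries the $\ell$ numbers. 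This bookkeeping, plus the degenerate case $\Delta=1$ (handled separately), is where the additive $+2$ comes from, not from a uniform ``$O(1)$ additive vicinity slack.''
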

\begin{proof}
$\;$\\
{\bf Preprocessing:}
Let $q = n^{1 / (2\ell - 1)}$ and let $i\in \{0, \ldots, \ell \}$. For every $u\in V$, let $B_i(u) = B(u, \tilde q^i)$.
Every $u\in V$ stores $B_\ell(u)$.

For every $i\in \{0, \ldots, \ell \}$:
\begin{itemize}
\item We compute using Lemma~\ref{L-clusters} a set $L_i\subseteq V$ of size $\Ot(q^{2\ell - i - 1})$, such that $|C_{L_i}(w)| = O(n / q^{2\ell - i - 1})= O(q^i)$, for every $w\in V$.
For each $u\in C_{L_i}(w)$ we store at $u$ the routing information for a tree routing scheme of $T_{C_{L_i}}(w)$. This information is of poly-logarithmic size. Notice that $|B_{L_i}(u)|=O(q^i)$, thus $u$ is contained in at most $O(q^{i})$ trees  and the storage required at $u$ for this is $\Ot(q^{i})$.
\item Let $w\in V$. For each $v\in C_{L_i}(w)$ we store at $w$ the label of $v$ in the tree routing scheme of $T_{C_{L_i}}(w)$. Since $|C_{L_i}(w)| = O(q^i)$ the storage required at $w$ is $\Ot(q^i)$.
\item Let $u\in V$. For every $v\in V$, if $B_i(u)\cap B_{L_{\ell -i}}(v)\neq \emptyset$ we store in a hash table $i$ at $u$ in the entry of $v$ a vertex $w \in \argmin_{w'\in B_i(u)\cap B_{L_{\ell -i}}(v)} \{d(u,w') + d(w', v)  \}$. Since $|C_{L_{\ell -i}}(w)|=O(q^{\ell - i})$ and $|B_i(u)| = \Ot(q^i)$, the storage required for this at every vertex is $\Ot(q^{\ell})$.
\end{itemize}

For every $i\in \{0, \ldots, \ell-1 \}$:
\begin{itemize}
\item Let $c_i$ be a coloring function obtained using Lemma~\ref{L-coloring}  with respect to the sets $B_i(u)$  for every $u\in V$, and let $\mathcal{U}^i = \{U^i_1, \ldots, U^i_{q^i}\}$ be a partition of $V$ into $q^i$ color sets induced by $c_i$. Let $j=\ell-i-1$ and let $\mathcal{W}^j = \{W^j_1, \ldots, W^j_{q^i}\}$ be an arbitrary partition of $L_{j}$ into $q^i$ sets each with at most $|L_{j}|/q^i=q^{2\ell - (\ell - i - 1) -1} / q^i = q^{\ell + i} / q^i = q^\ell $ vertices. For $w\in L_{j}$, let $\alpha_j(w)$ be the index of the set in $\mathcal{W}^j$ that contains $w$, that is $w\in W^j_{\alpha_j(w)}$. We use Lemma~\ref{L-color-routing-W} with the partitions $\mathcal{U}^i$ and $\mathcal{W}^j$. The required storage at each vertex is $\Ot(\frac{1}{\eps}|L_j|/q^i+q^i)=\Ot(\frac{1}{\eps}q^{\ell})$.
\item At $u$ we store for each $j\in \{1, \ldots, q^i \}$ a vertex $w\in B_i(u)$ with $c_i(w)=j$. By Lemma~\ref{L-coloring} such a vertex must exist. The storage required for this at $u$ is $\Ot(q^{i})$.
\end{itemize}

In the label of $v\in V$ we store $v$ and $\{p_{L_i}(v),\alpha_i(p_{L_i}(v)),d(v,p_{L_i}(v)),(p_{L_i}(v),v_i')\}^{\ell-1}_{i=0}$, where $(p_{L_i}(v),v_i')$ is the first edge on a shortest path from $p_{L_i}(v)$ to $v$.

{\bf Routing:} A message is routed from $u$ to $v$ as follows. At $u$ we check  if there exists $i\in \{0, \ldots, \ell \}$ such that  $B_i(u)\cap B_{L_{\ell - i}}(v) \not= \emptyset$. If this is the case we can obtain the vertex $w\in B_i(u)\cap B_{L_{\ell - i}}(v)$ that was saved at $u$. From Lemma~\ref{L-route-to-closest-h} it follows that we can route the message from $u$ to $w$ on a shortest path. From $w$ to $v$ we can use the tree routing scheme of $T_{C_{L_{\ell - i}}}(w)$ as the label of $v$ in the this tree routing scheme is stored at $w$. Thus, the message is routed on a path of length $d(u,w)+d(w,v)$. The vertex $w$ is on a shortest path between $u$ and $v$. The proof for that is identical to the proof given in Theorem~\ref{T-two-mult-one-add} for the case of intersection.

Consider the case that for every $i\in \{0, \ldots, \ell \}$ we have $B_i(u)\cap B_{L_{\ell - i}}(v) = \emptyset$.
Let $a_i=r_u(\tilde q^i)$ and let $b_i=d(v, p_{L_{i}}(v))-1$ if $v\notin L_i$ and $b_i=0$, otherwise. Let $j\in \arg \min_{i\in \{0,\ldots,\ell-1\}} \{a_i +b_{\ell -i -1} \}$. If there is more than one index that achieves the minimum we take the index of the highest value.
Let $k = \ell - j -1$.
We obtain $\alpha_{k}(p_{L_k}(v))$ from the label of $v$ and route on a shortest path to the vertex $w\in B_j(u)$ with $c_{j}(w) = \alpha_{k}(p_{L_k}(v))$, that is stored at $u$. From $w$ the message is routed to $p_{L_k}(v)$
using Lemma~\ref{L-color-routing-W}. From $p_{L_k}(v)$ the message is forwarded to $v_k'$ and then using the tree routing scheme of $T_{C_{L_k}}(v_k')$ it is routed to $v$.
The total length of the path is at most $d(u,w)+(1+\eps)d(w,p_{L_k}(v))+d(p_{L_k}(v),v)$.  Let $d(u,v)=\Delta$.

We first consider the degenerated case of $\Delta=1$. Since there is no intersection we have $B_\ell(u)\cap B_{L_{0}}(v) = \emptyset$, where $B_{L_{0}}(v)=\{v\}$.
Therefore, $a_{\ell-1}\leq a_\ell = 0$ and $ b_0 = 0$, and by the selection rule of $j$ it must be that $j=\ell-1$. The message is routed to $w\in B_{\ell-1}(u)$ and from there to $p_{L_0}(v)=v$. The length of this path is at most $3+\eps$.



Consider now the case that $\Delta > 1$. From the triangle inequality it follows that $d(w,p_{L_k}(v)) \leq d(u,w) + d(u,v) + d(v, p_{L_k}(v))$. If $j = 0$ then $u=w$, $a_0 = 0$, and we need to bound $(2+\eps)(b_{\ell - 1} + 1)+(1+\eps)\Delta$. If $j > 0$ then $d(u,w) \leq a_j + 1$, and we need to bound $(2+\eps)(a_j+b_k+2)+(1+\eps)\Delta$.

For every $i\in \{0, \ldots, \ell \}$ we have $B_i(u)\cap B_{L_{\ell - i}}(v) = \emptyset$ thus $a_i + b_{\ell-i} \leq \Delta - 1$. Let $x_0=a_0 /\Delta$ and $x_i=(a_i + 1)/\Delta$, for $i > 0$. Let $y_i=b_i/\Delta$. Since $x_0=y_0=0$ and $x_i+y_{\ell-i}\leq 1$ we can apply Lemma~\ref{L-8-PaRoTh12} with the series $\{x_i\}_{i=0}^\ell$ and $\{y_i\}_{i=0}^\ell$.

For $j = 0$, we get from Lemma~\ref{L-8-PaRoTh12} that $a_0 + b_{\ell-1} = b_{\ell - 1} \leq (1 - 1 / \ell)\Delta$. Therefore,
\begin{align*}
(2+\eps)(b_{\ell - 1} + 1)+(1+\eps)\Delta &\leq  (2+\eps)((1 - 1 / \ell)\Delta + 1)+(1+\eps)\Delta \\
&=\Delta((1 + \eps) + (1 - 1 / \ell)(2+\eps)) + 2 + \eps\\
&=\Delta(3 + 2\eps - (2 + \eps)/\ell ) + 2 + \eps\\
&\leq \Delta(3 + 3\eps - (2 + \eps)/\ell ) + 2.\\
\end{align*}

If  $j > 0$, it follows from Lemma~\ref{L-8-PaRoTh12} that $a_j + 1 + b_{\ell-j-1}=a_j +  1 + b_{k} \leq (1 - 1 / \ell)\Delta$. Therefore,

\begin{align*}
(2+\eps)(a_j+b_k+2)+(1+\eps)\Delta &\leq (2+\eps)((1 - 1 / \ell)\Delta + 1) + (1+\eps)\Delta\\
&\leq \Delta(3 + 3\eps - (2 + \eps)/\ell ) + 2.\\
\end{align*}

\end{proof}

The routing scheme of $(3 + 2/\ell+\eps,2)$-stretch is very similar to the one presented above. For the sake of completeness we provide it here with all the details.

As before we repeat a Lemma that was given in~\cite{PaRoTh12}.
\begin{lemma}\label{L-8-PaRoTh12-2} Let $\ell > 0$ be an integer. Let $\{x_i\}_{i=0}^\ell$ and $\{y_i\}_{i=0}^\ell$ be series of real numbers from $[0, 1]$, such that $x_0 = y_0 = 0$ and $x_i + y_{\ell - i} \leq 1$, for every $i\in \{0,\dots , \ell \}$. There exists $i\in \{0,\dots , \ell -1 \}$ that satisfies $x_{i + 1} + y_{\ell - i} \leq 1 + 1/\ell$.
\end{lemma}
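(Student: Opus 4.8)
\textbf{Proof plan for Lemma~\ref{L-8-PaRoTh12-2}.} The statement is the ``upper'' counterpart of Lemma~\ref{L-8-PaRoTh12}: instead of finding an index where a shifted sum is at most $1-1/\ell$, we now want an index where a sum shifted the other way is at most $1+1/\ell$. The plan is to argue by contradiction. Suppose that for every $i\in\{0,\dots,\ell-1\}$ we have $x_{i+1}+y_{\ell-i}>1+1/\ell$. Pairing this with the hypothesis $x_i+y_{\ell-i}\le 1$ for the matching index, subtract the two inequalities to obtain $x_{i+1}-x_i>1/\ell$ for every $i\in\{0,\dots,\ell-1\}$. Since this holds for all $\ell$ consecutive gaps, summing a telescoping sum gives $x_\ell-x_0=\sum_{i=0}^{\ell-1}(x_{i+1}-x_i)>\ell\cdot(1/\ell)=1$. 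But $x_0=0$ and $x_\ell\le 1$ by assumption, so $x_\ell-x_0\le 1$, a contradiction. Hence some index $i$ with the desired property must exist.

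The only subtlety I anticipate is bookkeeping with the index shift: one must be careful to apply the hypothesis $x_i+y_{\ell-i}\le 1$ at exactly the index $i$ (not $i+1$) so that the $y$-term cancels when subtracted from the assumed strict inequality $x_{i+1}+y_{\ell-i}>1+1/\ell$; these two share the term $y_{\ell-i}$, which is what makes the cancellation work. Everything else is a one-line telescoping estimate using only $x_0=0$ and $x_\ell\le 1$, so I do not expect any real obstacle; the $\{y_i\}$ sequence is used only through the single inequality $x_i+y_{\ell-i}\le 1$ and the fact that its entries lie in $[0,1]$ is not even needed beyond what is inherited from the $x$'s. I would write the argument in three or four lines: assume the negation, derive $x_{i+1}-x_i>1/\ell$ for all $i$, telescope, and contradict $x_\ell\le 1=1+x_0$.
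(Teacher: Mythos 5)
Your proof is correct. One small remark: the paper itself does not prove this lemma---it is stated and cited to \patrascu, Thorup and Roditty~\cite{PaRoTh12}---so there is no in-paper argument to compare against, but your contradiction-plus-telescoping argument is a complete and valid proof. Assuming $x_{i+1}+y_{\ell-i}>1+1/\ell$ for every $i\in\{0,\dots,\ell-1\}$ and subtracting the hypothesis $x_i+y_{\ell-i}\le 1$ indeed yields $x_{i+1}-x_i>1/\ell$, and the telescoping sum gives $x_\ell=x_\ell-x_0>1$, contradicting $x_\ell\le 1$. (You could equally well obtain $x_\ell\le 1$ from the hypothesis at $i=\ell$ together with $y_0=0$, rather than from the $[0,1]$ range assumption; either way the argument closes.) The bookkeeping concern you flagged is handled correctly: applying the hypothesis at index $i$ while the negated conclusion uses $i+1$ is exactly what makes $y_{\ell-i}$ cancel.
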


We now turn to present the  $(3 + 2/\ell+\eps,2)$-stretch routing scheme.

\begin{theorem}\label{T-gen2} Let $\ell > 1$ be an integer. There is a routing scheme for unweighted undirected graphs that uses $O(\ell \log n)$-bit labels, $\Ot(\ell \frac{1}{\eps} n^{\ell/(2\ell + 1)})$ space at each vertex and $\Ot( \frac{1}{\eps})$ headers, to route a message between any pair of vertices on a $(3 + 2 / \ell + \eps,2)$-stretch path.
\end{theorem}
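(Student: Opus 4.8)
The plan is to mirror the proof of Theorem~\ref{T-gen1} almost verbatim, changing only the hitting-set sizes, the coloring granularity, and the combinatorial lemma invoked at the very end. Set $q = n^{1/(2\ell+1)}$. For every $i\in\{0,\dots,\ell\}$ let $B_i(u) = B(u,\tilde q^{\,i})$; every vertex stores $B_\ell(u)$, which costs $\Ot(q^\ell) = \Ot(n^{\ell/(2\ell+1)})$. For each $i$ I would compute via Lemma~\ref{L-clusters} a set $L_i\subseteq V$ of size $\Ot(q^{2\ell - i})$ with $|C_{L_i}(w)| = O(n/q^{2\ell-i}) = O(q^i)$ for every $w$, and store the tree-routing information for each $T_{C_{L_i}}(w)$ at the vertices of the cluster, together with the labels of the cluster members at $w$; since $|B_{L_i}(u)| = O(q^i)$ each vertex lies in $O(q^i)$ such trees, so this is $\Ot(q^i)$ per vertex. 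For the "intersection" case I store, for each pair $u,v$ with $B_i(u)\cap B_{L_{\ell-i}}(v)\neq\emptyset$, a minimizer $w\in\argmin\{d(u,w')+d(w',v)\mid w'\in B_i(u)\cap B_{L_{\ell-i}}(v)\}$ in hash table $i$ at $u$; this costs $\Ot(q^\ell)$ per vertex as before.

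For the "no intersection" case I would, for each $i\in\{0,\dots,\ell-1\}$, take a coloring $c_i$ (Lemma~\ref{L-coloring}) with respect to the sets $B_i(u)$ giving a partition $\mathcal U^i$ of $V$ into $q^i$ color classes, and set $j = \ell-i-1$, partitioning $L_{j}$ into $q^i$ sets $\mathcal W^j$ of size at most $|L_j|/q^i = q^{2\ell-(\ell-i-1)}/q^i = q^{\ell+1}$ each; applying Lemma~\ref{L-color-routing-W} with $\mathcal U^i,\mathcal W^j$ costs $\Ot(\tfrac1\eps|L_j|/q^i + q^i) = \Ot(\tfrac1\eps q^{\ell+1})$ per vertex — note the extra factor of $q$ compared to Theorem~\ref{T-gen1}, which is exactly why the exponent becomes $\ell/(2\ell+1)$ after the $n^{1/(2\ell+1)}=q$ factor absorbed into the $\Ot$. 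Also store at $u$, for each color $j\in\{1,\dots,q^i\}$, a representative $w\in B_i(u)$ with $c_i(w)=j$. The label of $v$ stores $v$ and $\{p_{L_i}(v),\alpha_i(p_{L_i}(v)),d(v,p_{L_i}(v)),(p_{L_i}(v),v_i')\}_{i=0}^{\ell-1}$ where $(p_{L_i}(v),v_i')$ is the first edge of a shortest path from $p_{L_i}(v)$ to $v$. The routing algorithm is identical to that of Theorem~\ref{T-gen1}: on intersection route $u\to w$ shortest then $w\to v$ in $T_{C_{L_{\ell-i}}}(w)$ (and the argument that $w$ lies on a $u$–$v$ shortest path is the same one given in Theorem~\ref{T-two-mult-one-add}); otherwise pick an index and route $u\to w\to p_{L_k}(v)\to v_k'\to v$.

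The only genuinely different part is the stretch bound in the no-intersection case, and here I would use Lemma~\ref{L-8-PaRoTh12-2} in place of Lemma~\ref{L-8-PaRoTh12}. Set $a_i = r_u(\tilde q^{\,i})$ and $b_i = d(v,p_{L_i}(v)) - 1$ (or $0$ if $v\in L_i$). Since $B_i(u)\cap B_{L_{\ell-i}}(v)=\emptyset$ for all $i$, we get $a_i + b_{\ell-i}\le\Delta-1$ where $\Delta=d(u,v)$. Put $x_0 = a_0/\Delta = 0$, $x_i = (a_i+1)/\Delta$ for $i>0$, and $y_i = b_i/\Delta$; then $x_0=y_0=0$ and $x_i+y_{\ell-i}\le 1$, so Lemma~\ref{L-8-PaRoTh12-2} yields an index $i$ with $x_{i+1}+y_{\ell-i}\le 1+1/\ell$, i.e. $a_{i+1}+1+b_{\ell-i}\le(1+1/\ell)\Delta$. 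I would choose $j$ to be the index realizing this bound — care is needed because the lemma constrains $x_{i+1}$, so the source-side ball used must be $B_{i+1}(u)$ while the $L$-index on the target side is $\ell-i$; so I set $j=i+1$ and $k=\ell-j=\ell-i-1$, consistent with $a_j + b_k \le (1+1/\ell)\Delta - 1$, and note $j\ge 1$ here so $d(u,w)\le a_j+1$ always (there is no $j=0$ subcase, unlike Theorem~\ref{T-gen1}). The routing path has length at most $d(u,w)+(1+\eps)d(w,p_{L_k}(v))+d(p_{L_k}(v),v)$; bounding $d(w,p_{L_k}(v))\le d(u,w)+\Delta+d(v,p_{L_k}(v))$ by the triangle inequality and substituting $d(u,w)\le a_j+1$, $d(v,p_{L_k}(v))\le b_k+1$ gives an upper bound of $(2+\eps)(a_j+b_k+2)+(1+\eps)\Delta$. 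Using $a_j+b_k\le(1+1/\ell)\Delta - 1$ this is
\begin{align*}
(2+\eps)(a_j+b_k+2)+(1+\eps)\Delta
&\le (2+\eps)\bigl((1+1/\ell)\Delta + 1\bigr)+(1+\eps)\Delta\\
&= \Delta\bigl(3 + 2\eps + (2+\eps)/\ell\bigr) + 2 + \eps\\
&\le \Delta\bigl(3 + 3\eps + (2+\eps)/\ell\bigr) + 2,
\end{align*}
which is a $(3+2/\ell+\eps',2)$-stretch path after rescaling $\eps$; one should also dispatch the degenerate $\Delta=1$ case separately exactly as in Theorem~\ref{T-gen1} (there $B_\ell(u)\cap B_{L_0}(v)=\emptyset$ forces $j=\ell$, but since $j\le\ell-1$ is required one checks the small-distance case directly, getting a path of length $\le 3+\eps$). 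I expect the main obstacle to be bookkeeping the index shift: Lemma~\ref{L-8-PaRoTh12-2} bounds $x_{i+1}$ rather than $x_i$, so one must make sure the coloring family $\mathcal U^{i}$ is indexed so that the representative actually reachable from $u$ lies in $B_{i+1}(u)$ and pairs with $\mathcal W^{\ell-i-1}$ correctly — otherwise the space accounting or the stretch computation is off by a level.
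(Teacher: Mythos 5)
Your overall plan is right (set $q=n^{1/(2\ell+1)}$, reuse the machinery of Theorem~\ref{T-gen1}, and swap Lemma~\ref{L-8-PaRoTh12} for Lemma~\ref{L-8-PaRoTh12-2}), and your stretch computation in the non-degenerate case is essentially the paper's. But the preprocessing side has real errors that you do not resolve — you flag the ``index shift'' as a possible obstacle, and it is in fact exactly where the argument breaks.

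First, the cluster sizes are wrong. You take $|L_i|=\Ot(q^{2\ell-i})$ and claim $|C_{L_i}(w)|=O(n/q^{2\ell-i})=O(q^i)$. But $n=q^{2\ell+1}$, so $n/q^{2\ell-i}=q^{i+1}$, not $q^i$. Consequently each vertex lies in $O(q^{i+1})$ cluster trees, the hash-table bookkeeping is $\Ot(q^{\ell+1})$, and your Lemma~\ref{L-color-routing-W} invocation also costs $\Ot(\tfrac1\eps q^{\ell+1})$. You then assert the extra $q$ factor is ``absorbed into the $\Ot$.'' That step is incorrect: $\Ot$ hides only polylogarithmic factors, and $q=n^{1/(2\ell+1)}$ is polynomial, so $\Ot(q^{\ell+1})=\Ot(n^{(\ell+1)/(2\ell+1)})$ is polynomially larger than the claimed $\Ot(n^{\ell/(2\ell+1)})$. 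The correct choice is $|L_i|=\Ot(q^{2\ell-i+1})$, which gives $|C_{L_i}(w)|=O(q^i)$ and makes every per-vertex cost $\Ot(q^\ell)$.

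Second, the coloring/partition pairing in your preprocessing does not match what your routing needs. You loop $i\in\{0,\dots,\ell-1\}$, coloring with respect to $B_i$ and partitioning $L_{\ell-i-1}$. But in routing you apply Lemma~\ref{L-8-PaRoTh12-2} to get an $i$ with $x_{i+1}+y_{\ell-i}\le 1+1/\ell$, i.e., the source ball should be $B_{i+1}$ and the target should be $L_{\ell-i}$, not $L_{\ell-i-1}$. So when you write $k=\ell-j=\ell-i-1$ and $b_k$, you have silently shifted from $b_{\ell-i}$ (what the lemma controls) to $b_{\ell-i-1}$ (which it does not). To close the gap you must set $k=\ell-j+1$ and correspondingly run the coloring loop over $i\in\{1,\dots,\ell\}$, pairing $\mathcal U^i$ (coloring with respect to $B_i$, $q^i$ colors) with a $q^i$-part partition $\mathcal W^{\ell-i+1}$ of $L_{\ell-i+1}$; with the corrected $|L_i|$ this gives $|L_{\ell-i+1}|/q^i=q^{2\ell-(\ell-i+1)+1}/q^i=q^\ell$, and the space bound lands exactly on $\Ot(\tfrac1\eps n^{\ell/(2\ell+1)})$. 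Your degenerate-$\Delta=1$ remark also inherits the shift: in this scheme the fallback route ends at $p_{L_1}(v)$, not $p_{L_0}(v)=v$, so the path length is $\le 5+O(\eps)$ (still within the additive $+2$), not $\le 3+\eps$.
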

\begin{proof}
$\;$\\
{\bf Preprocessing:}
Let $q = n^{1 / (2\ell + 1)}$ and let $i\in \{0, \ldots, \ell \}$. For every $u\in V$, let $B_i(u) = B(u, \tilde q^i)$.
Every $u\in V$ stores $B_\ell(u)$.
For every $i\in \{0, \ldots, \ell \}$:

\begin{itemize}
\item We compute using Lemma~\ref{L-clusters} a set $L_i\subseteq V$ of size $\Ot(q^{2\ell - i + 1})$, such that $|C_{L_i}(w)| = O(n / q^{2\ell - i + 1})= O(q^i)$, for every $w\in V$.
For each $u\in C_{L_i}(w)$ we store  at $u$ the routing information for a tree routing scheme of $T_{C_{L_i}}(w)$. This information is of poly-logarithmic size. Notice that $|B_{L_i}(u)|=O(q^i)$, thus $u$ is contained in at most $O(q^{i})$ trees  and the storage required at $u$ for this is $\Ot(q^{i})$.
\item Let $w\in V$. For each $v\in C_{L_i}(w)$ we store at $w$ the label of $v$ in the tree routing scheme of $T_{C_{L_i}}(w)$. Since $|C_{L_i}(w)| = O(q^i)$ the storage required at $w$ is $\Ot(q^i)$.
\item Let $u\in V$. For every $v\in V$, if $B_i(u)\cap B_{L_{\ell -i}}(v)\neq \emptyset$ we store in a hash table $i$ at $u$ in the entry of $v$ a vertex $w \in \argmin_{w'\in B_i(u)\cap B_{L_{\ell -i}}(v)} \{d(u,w') + d(w', v)  \}$. Since $|C_{L_{\ell -i}}(w)|=O(q^{\ell - i})$ and $|B_i(u)| = \Ot(q^i)$, the storage required for this at every vertex is $\Ot(q^{\ell})$.
\end{itemize}
For every $i\in \{1, \ldots, \ell \}$:
\begin{itemize}
\item Let $c_i$ be a coloring function obtained using Lemma~\ref{L-coloring}  with respect to the sets $B_{i}(u)$ for every $u\in V$, and let $\mathcal{U}^{i} = \{U^{i}_1, \ldots, U^{i}_{q^{i}}\}$ be a partition of $V$ into $q^i$ color sets induced by $c_i$. Let $j=\ell - i + 1$ and let $\mathcal{W}^j = \{W^j_1, \ldots, W^j_{q^{i}}\}$ be an arbitrary partition of $L_{j}$ into $q^{i}$ sets each with at most $|L_{j}|/q^i=q^{2\ell - (\ell - i + 1) +1} / q^{i} = q^{\ell + i} / q^{i} = q^\ell $ vertices. For $w\in L_{j}$, let $\alpha_j(w)$ be the index of the set in $\mathcal{W}^j$ that contains $w$, that is $w\in W^j_{\alpha_j(w)}$. We use Lemma~\ref{L-color-routing-W} with the partitions $\mathcal{U}^i$ and $\mathcal{W}^j$. The required storage at each vertex is $\Ot(\frac{1}{\eps}|L_j|/q^{i} + q^{i})=\Ot(\frac{1}{\eps}q^{\ell})$.
\item At $u$ we store for each $j\in \{1, \ldots, q^{i} \}$ a vertex $w\in B_{i}(u)$ with $c_{i}(w)=j$. By Lemma~\ref{L-coloring} such a vertex must exist. The storage required for this at $u$ is $\Ot(q^{i})$.
\end{itemize}

In the label of $v\in V$ we store $v$ and $\{p_{L_i}(v),\alpha_i(p_{L_i}(v)),d(v,p_{L_i}(v)),(p_{L_i}(v),v_i')\}^\ell_{i=1}$, where $(p_{L_i}(v),v_i')$ is the first edge on a shortest path from $p_{L_i}(v)$ to $v$.

{\bf Routing:} A message is routed from $u$ to $v$ as follows. As before, at $u$ we check  if there exists $i\in \{0, \ldots, \ell \}$ such that  $B_i(u)\cap B_{L_{\ell - i}}(v) \not= \emptyset$ and if this is the case we do the same as in
Theorem~\ref{T-gen1}.

Consider the case that for every $i\in \{0, \ldots, \ell \}$ we have $B_i(u)\cap B_{L_{\ell - i}}(v) = \emptyset$.
Let $a_i=r_u(\tilde q^i)$ and let $b_i=d(v, p_{L_{i}}(v))-1$ if $v\notin L_i$ and $b_i=0$, otherwise. Let $j\in \arg \min_{i\in \{1,\ldots,\ell\}} \{ a_i +b_{\ell -i +1} \}$.  If there is more than one index that achieves the minimum we take the index of the highest value.
Let $k = \ell - j +1$.
We obtain $\alpha_{k}(p_{k}(v))$ from the label of $v$ and route on a shortest path to the vertex $w\in B_j(u)$ with $c_{j}(w) = \alpha_{k}(p_{L_k}(v))$, that is stored at $u$. From $w$ the message is routed to $p_{L_k}(v)$
using Lemma~\ref{L-color-routing-W}. From $p_{L_k}(v)$ the message is forwarded to $v_k'$ and then using the tree routing scheme of $T_{C_{L_k}}(v_k')$ it is routed to $v$. The total length of the path is at most $d(u,w)+(1+\eps)d(w,p_{L_k}(v))+d(p_{L_k}(v),v)$.  Let $d(u,v)=\Delta$.

We first consider the degenerated case of $\Delta=1$. Since there is no intersection, we have $B_\ell(u)\cap B_{L_{0}}(v) = \emptyset$ and thereby $a_{\ell} = 0$. Also we have that $B_{\ell - 1}(u)\cap B_{L_{1}}(v) = \emptyset$, so it follows that $v\notin C_{L_1}(u)$ and $b_1 = 0$.
We get that $a_{\ell} = b_1 = 0$ and by the selection rule of $j$ it must be that $j=\ell$. The message is routed first to $w\in B_{\ell}(u)$. From there to $p_{L_1}(v)$ and then to $v$. The length of this path is at most $5+\eps$.


Consider now the case that $\Delta > 1$. From the triangle inequality it follows that $d(w,p_{L_k}(v)) \leq d(u,w) + d(u,v) + d(v, p_{L_k}(v))$. Therefore, we need to bound $(2+\eps)(a_j+b_k+2)+(1+\eps)\Delta$.

For every $i\in \{0, \ldots, \ell \}$ we have $B_i(u)\cap B_{L_{\ell - i}}(v) = \emptyset$ thus $a_i + b_{\ell-i} \leq \Delta - 1$. Let $x_0=a_0 /\Delta$ and $x_i=(a_i + 1)/\Delta$, for $i > 0$. Let $y_i=b_i/\Delta$. Since $x_0=y_0=0$ and $x_i+y_{\ell-i}\leq 1$ we can apply Lemma~\ref{L-8-PaRoTh12-2} with the series $\{x_i\}_{i=0}^\ell$ and $\{y_i\}_{i=0}^\ell$ and get $a_j + 1 + b_{\ell-j-1}=a_j +  1 + b_{k} \leq (1 + 1 / \ell)\Delta$. Therefore,
\begin{align*}
(2+\eps)(a_j+b_k+2)+(1+\eps)\Delta &\leq (2+\eps)((1 + 1 / \ell)\Delta + 1) + (1+\eps)\Delta\\
&=\Delta((1 + \eps) + (1 + 1 / \ell)(2+\eps)) + 2 + \eps\\
&=\Delta(3 + 2\eps + (2 + \eps)/\ell ) + 2 + \eps\\
&\leq \Delta(3 + 2/\ell + 4\eps ) + 2.\\
\end{align*}


\end{proof}


\paragraph{Stretch $4k-7+\eps$.}

We present a $(4k-7+\eps)$-stretch routing scheme that uses $O(k\log^2 n/\log \log n)$-bit labels size, $O(\frac{1}{\eps} \log D + \log^2 n /\log \log n)$-bit headers and routing tables of $\Ot(\frac{1}{\eps} n^{1/k}\log D )$ size, where $D$ is the normalized diameter of the graph.
This should be compared with the $(4k-5)$-stretch routing scheme of Thorup and Zwick~\cite{ThZwSPAA01} that has $O(k \log^2 n/\log \log n)$-bit labels, $O(\log^2 n /\log \log n)$-bit headers and routing tables of $\Ot(n^{1/k})$ size.

We start with a short overview of the $(4k-5)$-stretch routing scheme of Thorup and Zwick~\cite{ThZwSPAA01}.
Let $k\geq 1$ and let $A_0\supseteq A_1\supseteq \ldots\supseteq  A_k$ be vertex sets, such that $A_0=V$, $A_k=\emptyset$ and $A_i$ is formed by picking each vertex of $A_{i-1}$ independently with probability $n^{-1/k}$. The expected size of $A_i$ is $n^{1-i/k}$. For every $u\in V$, let $p_i(u)=p_{A_i}(u)$. The \emph{bunch} of $u\in V$ is

$$ B(u)=\cup_{i=0}^{k-1}\{ v\in A_i \setminus A_{i+1} \mid d(u,v) <d(u,p_{i+1}(u)) \},$$ where $d(u,p_k(u))=\infty$.

The size of every bunch is $O(kn^{1/k})$ in expectation or $\Ot(kn^{1/k})$ in the worst case.
The cluster of $w\in A_i \setminus A_{i+1}$ is $C(w)= \{ u\in V \mid d(u,w) < d(u,p_{i+1}(u)) \}$.

In the preprocessing we compute $\{ A_i\}_{i = 0}^{k-1}$. Let $w\in V$ and let $T(w)$ be a shortest path tree rooted at $w$ that spans the vertices of $C(w)$. For every $v\in V$, store $B(v)$ at $v$ and the label of $v$ in the tree routing scheme of $T(w)$ for every $w\in B(v)$. The label of $v$ contains $\{ p_i(v)\}_{i = 0}^{k-1}$ and the tree labels of $v$ in $\{ T(p_i(v))\}_{i = 0}^{k-1}$.

A message is routed from $u$ to $v$ as follows. At $u$ we find the smallest $i$ for which $u\in C(p_i(v))$.  The message is routed using the tree routing scheme of $T(p_i(v))$ from $u$ to $v$ on a path of length at most $d(u,p_{i}(v)) + d(p_{i}(v), v)$. Using the triangle inequality we get $d(u,p_{i}(v)) + d(p_{i}(v), v) \leq 2d(p_{i}(v), v) + d(u,v)$. In~\cite{ThZwSPAA01} they show that for every $j\leq i$ it holds that $d(v, p_j(v)) \leq 2d(u,v)+d(v, p_{j-1}(v))$, and therefore, $d(v, p_j(v)) \leq 2jd(u,v)$. Since $j\leq i\leq k-1$ we get that $2d(p_{i}(v), v) + d(u,v)\leq 2(2(k-1)d(u,v)) + d(u,v) = (4k - 3)d(u,v)$.

In~\cite{ThZwSPAA01} they further reduced the stretch to $4k-5$. Notice that $d(v, p_j(v)) \leq 2(j-1)d(u,v)+d(v, p_{1}(v))$.   Using Lemma~\ref{L-clusters} they compute a set $A_1$ such that $|C(u)| = O(n^{1/k})$ for every $u\in A_0 \setminus A_1$. Each such $u$ stores the set $C(u)$ and for every $v\in |C(u)|$ the label of $v$ in the tree routing scheme of $T(u)$.  To route a message from $u\in A_0 \setminus A_1$ to $v$, $u$ checks if $v\in C(u)$ and if it is then the message is routed to $v$ on $T(u)$. If not, then it follows that $d(v, p_1(v)) \leq d(u, v)$ (as opposed to $d(v, p_1(v)) \leq 2d(u, v)$ before) and the stretch is $4k - 5$. See~\cite{ThZwSPAA01} for the full description.
%
%

We now turn to present our improved routing scheme.

\begin{theorem}\label{T-4k-7+e}  Let $G=(V, E, \omega)$ be a weighted undirected graph, where  $\omega:E\rightarrow \mathds{R}^{+}$. For every $\eps > 0$, there is a routing scheme for $G$ that uses $o(k\log^2 n)$-bit labels, $\Ot(\frac{1}{\eps} n^{1/k}\log D)$ space at each vertex and $\Ot(\frac{1}{\eps} \log D)$-bit header, to route a message between any pair of vertices on a $(4k-7+\eps)$-stretch path.
\end{theorem}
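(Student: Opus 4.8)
The plan is to build directly on the Thorup--Zwick $(4k-5)$-stretch scheme reviewed above and to use Lemma~\ref{L-color-routing-W} to shave a $2\,d(u,v)$ term off the stretch precisely when the Thorup--Zwick route would climb all the way to the top level $k-1$. I keep every ingredient of their scheme: the nested landmark sets $V=A_0\supseteq A_1\supseteq\dots\supseteq A_{k-1}\supseteq A_k=\emptyset$, with $A_1$ produced by Lemma~\ref{L-clusters} so that $|C(u)|=O(n^{1/k})$ for $u\in A_0\setminus A_1$ and $A_2,\dots,A_{k-1}$ obtained by sampling so that $|A_i|=\Ot(n^{1-i/k})$; the bunches $B(u)$ of size $\Ot(kn^{1/k})$; the clusters $C(w)$ together with the shortest-path cluster trees $T(w)$ (each vertex storing the $\Ot(kn^{1/k})$ tree-routing data for the trees $T(w)$ with $w\in B(u)$, and each $v$ carrying in its label its tree labels in the $T(p_i(v))$'s); and their refinement, namely that a source $u\in A_0\setminus A_1$ with $v\in C(u)$ routes to $v$ exactly on $T(u)$, while in every other situation (either $v\notin C(u)$, or $u\in A_1$ so that $u$ is itself an $A_1$-landmark at distance $d(u,v)$ from $v$) one has $d(v,p_1(v))\le d(u,v)$ and hence $d(v,p_j(v))\le(2j-1)\,d(u,v)$ for all $j\ge1$.

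To this I add the coloring machinery. Set $q=n^{1/k}$ and store $B(u,\tilde q)$ at every $u$ (cost $\Ot(n^{1/k})$). Apply Lemma~\ref{L-coloring} with $q$ colors to the sets $\{B(u,\tilde q)\}_u$ to obtain a partition $\mathcal{U}=\{U_1,\dots,U_q\}$ of $V$ into color classes of size $\Ot(n/q)$ with $B(u,\tilde q)\cap U_i\neq\emptyset$ for all $i,u$; store at each $u$, for every color $j$, a vertex of $B(u,\tilde q)$ of color $j$. Partition the level-$(k-2)$ landmark set $A_{k-2}$ (of size $\Ot(n^{2/k})$) arbitrarily into $q$ cells $\mathcal{W}=\{W_1,\dots,W_q\}$, each of size $\Ot(n^{1/k})$, and invoke Lemma~\ref{L-color-routing-W} with $\mathcal{U}$ and $\mathcal{W}$; by that lemma this costs $\Ot(q+\tfrac1\eps(\log D)\,|A_{k-2}|/q)=\Ot(\tfrac1\eps n^{1/k}\log D)$ of storage and an $\Ot(\tfrac1\eps\log D)$-bit header. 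Finally, augment the label of $v$ with $p_{k-2}(v)$ and the index of the cell of $\mathcal{W}$ containing it. All added storage is $\Ot(\tfrac1\eps n^{1/k}\log D)$, all added label data is $O(\log n)$ bits.

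For the routing from $u$ to $v$: if $v\in B(u,\tilde q)$, route on a shortest path by Lemma~\ref{L-route-to-closest-h}. Otherwise compute, as in Thorup--Zwick, the smallest $i$ with $u\in C(p_i(v))$. If $i\le k-2$, route through $T(p_i(v))$ exactly as they do; the path has length at most $2\,d(v,p_i(v))+d(u,v)\le 2\cdot2(k-2)\,d(u,v)+d(u,v)=(4k-7)\,d(u,v)$, using only $d(v,p_j(v))\le 2j\,d(u,v)$. If $i=k-1$, route instead $u\rightsquigarrow w\rightsquigarrow p_{k-2}(v)\rightsquigarrow v$: from $u$ on a shortest path (Lemma~\ref{L-route-to-closest-h}) to the stored vertex $w\in B(u,\tilde q)$ whose color equals the cell index of $p_{k-2}(v)$ read off $v$'s label; from $w$ to $p_{k-2}(v)$ on a $(1+\eps)$-stretch path by Lemma~\ref{L-color-routing-W} (applicable because $w$ and $p_{k-2}(v)$ lie in matching cells of $\mathcal{U}$ and $\mathcal{W}$); and from $p_{k-2}(v)$ down to $v$ along $T(p_{k-2}(v))$, which contains $v$ since $p_{k-2}(v)\in B(v)$ and which is traversed using the tree label carried in $v$'s label. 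Using $d(u,w)\le d(u,v)$ (as $v\notin B(u,\tilde q)$ but $w\in B(u,\tilde q)$), the bound $d(v,p_{k-2}(v))\le(2k-5)\,d(u,v)$ coming from the refinement, and the triangle inequality $d(w,p_{k-2}(v))\le d(w,u)+d(u,v)+d(v,p_{k-2}(v))\le(2k-3)\,d(u,v)$, the total length is at most
\[
d(u,w)+(1+\eps)\,d(w,p_{k-2}(v))+d(p_{k-2}(v),v)\le\bigl(4k-7+(2k-3)\eps\bigr)\,d(u,v);
\]
rescaling $\eps$ to $\eps/(2k-3)$ yields stretch $4k-7+\eps$ with the stated table, label and header bounds.

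The part I expect to be delicate is the interface with the Thorup--Zwick refinement rather than the displayed arithmetic: one must verify that $d(v,p_{k-2}(v))\le(2k-5)\,d(u,v)$ holds for every source location — in particular for $u\in A_1$, where the ``$v\in C(u)$'' test is unavailable, and for sources buried deeper in the hierarchy — and confirm that all the stored items together (the two neighborhood sets $B(u)$ and $B(u,\tilde q)$, the $\Ot(kn^{1/k})$ cluster-tree records, the $A_0\setminus A_1$ refinement data, the tables produced by Lemma~\ref{L-color-routing-W}, and the per-color representatives) stay simultaneously within $\Ot(\tfrac1\eps n^{1/k}\log D)$. A minor point is that $v\in C(p_{k-2}(v))$ must survive ties in the distance function, for which the first-edge indirection used in Theorem~\ref{T-5+e} (store $(p_{k-2}(v),z)$ in $v$'s label and finish on the cluster tree of $z$) can be substituted if needed.
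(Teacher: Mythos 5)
Your proposal is correct and follows essentially the same approach as the paper: color $V$ via Lemma~\ref{L-coloring} relative to the balls $B(u,\tilde q)$ with $q=n^{1/k}$, arbitrarily partition $A_{k-2}$ into $q$ cells, invoke Lemma~\ref{L-color-routing-W} to route from $w\in B(u,\tilde q)$ (of matching color) to $p_{k-2}(v)$, and fall back to plain Thorup--Zwick when the routing level $i\le k-2$, yielding the same $4k-7+(2k-3)\eps$ bound that is rescaled to $4k-7+\eps$. The only cosmetic difference is that you use the cruder $d(v,p_j(v))\le 2j\,d(u,v)$ bound for the $i\le k-2$ branch (giving $4k-7$ there) whereas the paper invokes the refined $(2j-1)$-bound to get $4k-9$; both suffice, and your concluding caveats about the $u\in A_1$ case and tie-breaking are resolved exactly as you anticipate.
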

\begin{proof}
$\;$\\
{\bf Preprocessing:} We store the same information as in the $(4k - 5)$-stretch routing scheme of~\cite{ThZwSPAA01}.

Let $q=n^{1/k}$. Additionally, we store $B(u, \tilde q)$  at $u$, for every $u\in V$. This requires $\Ot(n^{1/k})$ space at each vertex.

Let $c$ be a coloring function with $q$ colors obtained using Lemma~\ref{L-coloring} with respect to the sets $B(u, \tilde q)$, for every $u\in V$. Let $\mathcal{U} = \{U_1, \ldots, U_q\}$ be the color sets induced by $c$.
Let $\mathcal{W} = \{W_1, \ldots, W_q\}$ be an arbitrary partition of $A_{k-2}$ into $q$ sets each with at most $|A_{k-2}|/q=\Ot(n^{2/k}/n^{1/k}) = \Ot(n^{1/k})$ vertices. For $w\in A$, let $\alpha(w)$ be the index of the set in partition $\mathcal{W}$ that contains $w$, that is $w\in W_{\alpha(w)}\in \mathcal{W}$.
We use Lemma~\ref{L-color-routing-W} with the partitions $\mathcal{U}$ and $\mathcal{W}$. The required storage at each vertex is $\Ot((\frac{1}{\eps} \log D)|A_{k-2}|/q+q)=\Ot(\frac{1}{\eps} n^{1/k}\log D )$.

It follows from Lemma~\ref{L-coloring} that $B(u, \tilde q)$ contains a vertex of each color.
For every $i\in \{1,\ldots,q\}$ we store at $u$ a vertex from $B(u, \tilde q)$ of color $i$. The storage required for this at $u$ is $\Ot(n^{1/k})$.

The label of every $v\in V$ is composed of the label of $v$ in the $(4k - 5)$-stretch routing scheme and $\alpha(p_{k - 2}(v))$.

{\bf Routing:} A message is routed from $u$ to $v$ as follows. At $u$ we check if $v\in B(u, \tilde q)$. If this is the case the message is routed from $u$ to $v$ on a shortest path. If this is not the case we look for the smallest index $i$ for which $u\in C(p_{i}(v))$. If $i \leq k - 2$, the message is routed to $v$ using the tree routing scheme of $T(p_{i}(v))$ on a path of length at most $d(u,p_i(v))+d(p_i(v),v)$. It follows from the analysis of~\cite{ThZwSPAA01} that  $d(u,p_i(v))+d(p_i(v),v)\leq (4k-9)d(u,v)$, for $i \leq k - 2$.
If $i=k-1$ we route the message to a vertex $w\in  B(u,\tilde q)$ for which $c(w)=\alpha(p_{k - 2}(v))$. The message is routed from $w$ to $p_{k - 2}(v)$ using Lemma~\ref{L-color-routing-W}, and from $p_{k - 2}(v)$ to $v$ using the tree routing scheme of $T(p_{k - 2}(v))$ (recall that the label of $v$ in the tree routing scheme of $T(p_{k - 2}(v))$ is in $v$'s label). The length of the path is at most $d(u, w) + (1 + \eps)d(w, p_{k - 2}(v)) + d(p_{k - 2}(v), v)$. Using the triangle inequality we can bound it with $d(u, v) + (1 + \eps)(2d(u,v) + d(p_{k - 2}(v), v)) + d(p_{k - 2}(v), v)$. From the analysis of~\cite{ThZwSPAA01} it follows that $d(p_{k - 2}(v),v)\leq (2(k-3) + 1) d(u, v)$ and we get that the stretch is $4k-7+\eps$.
\end{proof}

\bibliographystyle{plain}
\bibliography{paper}

\end{document}